\journalname{Preprint}
	\theoremstyle{plain}
	\newtheorem{theorem}{Theorem}
	\newtheorem{lemma}{Lemma}
	\newtheorem{proposition}{Proposition}
	\newtheorem{corollary}{Corollary}
	\newtheorem{problem}{Problem}
	\newtheorem{definition}{Definition}
	\newtheorem{remark}{Remark}
    \newcommand{\prl}[1] 		{\left(#1\right)}
    \newcommand{\brl}[1] 		{\left[#1\right]}
    \newcommand{\crl}[1] 		{\left\{ #1\right\}}
    \newcommand{\R}             {\mathbb{R}} 
 	\newcommand{\N}         		{\mathbb{N}} 
  	\newcommand{\card}[1]		{\left| #1 \right|} 
	\newcommand{\norm}[1] 		{\left \| #1 \right \|} 
    \newcommand{\absval}[1] 		{\left | #1 \right |}
    \newcommand{\tr}[1]         {{#1}^\mathrm{T}}
	\newcommand{\mat}[1] 		{\mathrm{\mathbf{#1}}}
    \newcommand{\refeqn}[1]			{(\ref{#1})}
    \newcommand{\reffig}[1]			{Figure \ref{#1}}
    \newcommand{\refsec}[1]			{Section \ref{#1}}
    \newcommand{\refapp}[1]			{\ref{#1}}
	\newcommand{\reftab}[1]			{Table \ref{#1}}
	\newcommand{\refthm}[1]			{Theorem \ref{#1}}
	\newcommand{\refprop}[1]		    {Proposition \ref{#1}}
	\newcommand{\refcor}[1]			{Corollary \ref{#1}}
	\newcommand{\reflem}[1]			{Lemma \ref{#1}}
	\newcommand{\refdef}[1]			{Definition \ref{#1}}
	\newcommand{\refrem}[1]			{Remark \ref{#1}}
	\newcommand{\refprob}[1]		{Problem \ref{#1}}
	\newcommand{\indicator}          {\mathds{1}}
    \newcommand{\ldf}				{  \: {\mathbf := }  \: }
    \newcommand{\sqz}[1]            {\! #1 \!}
   \def \logicor {\vee}
   \newcommand{\THEN}{\;\Rightarrow\;}
   \newcommand{\minus}{\smallsetminus}
\newcommand{\bigO}[1] 			{\mathrm{O}(#1)}
\newcommand{\myTHEN}{{\;\Longrightarrow\;}}
\newcommand{\treenorm}[2]{{\left\Vert #2\right\Vert_{#1}}}
\newcommand{\crossnorm}[2]{{\left\Vert #2\right\Vert_{#1}}}
\newcommand{\projpaths}[2]{{\mathbf{Path}_{#1}\!\left(#2\right)}}
\newcommand{\projector}[1]{{\Gamma_\indexset\!\left(#1\right)}}
\newcommand{\res}[1]{\big|_{#1}}
\newcommand{\length}[1]{\ell\left(~#1~\right)}
\newcommand{\xindex}[2]{\left[\!\left[ #1\,|\,#2\right]\!\right]}
\newcommand{\dist} 				{\mathit{d}}  
\newcommand{\distNNI}          	{\dist_{NNI}} 
\newcommand{\distRF}           	{\dist_{RF}}  
\newcommand{\distMS}             {\dist_{MS}}  
\newcommand{\distCM}             {\dist_{CM}}  
\newcommand{\distCC}             {\dist_{CC}}  
\newcommand{\distNav}            {\dist_{nav}} 
\newcommand{\indexset}			{S} 
\newcommand{\treetopspace}      	{\mathcal{T}} 
\newcommand{\bintreetopspace}   	{\mathcal{BT}} 
\newcommand{\BT}[1]{\bintreetopspace_{#1}}
\newcommand{\tree}        		{\tau}
\newcommand{\treeA}             	{\sigma}
\newcommand{\treeB}             	{\tau}
\newcommand{\treeC}             	{\gamma}
\newcommand{\cluster}[1] 		{\mathcal{C}\prl{#1}} 
\newcommand{\intcluster}[1]     	{\mathcal{C}_{int} \prl{#1} } 
\newcommand{\PowerSet}[1]       	{\mathcal{P}\prl{#1}} 
\newcommand{\diam}[1] 			{\mathrm{diam}\prl{#1}} 
\newcommand{\treeres}[1]        {res_{#1}}
\newcommand{\treeressym}[2]     {{#2}\big|_{#1}}
\newcommand{\myproj}[1]{{\mathbf{P}_{\indexset}\!\left(#1\right)}}
\newcommand{\compatible}  		{\bowtie} 
\newcommand{\CMat}              	{\mat{C}} 
\newcommand{\XMat}              	{\mat{X}} 
\newcommand{\SMat}              	{\mat{S}} 
\newcommand{\parentCL}[1]			{\mathrm{Pr} \left( #1 \right)}
\newcommand{\grandparentCL}[1]		{\mathrm{Pr}^2\prl{#1}}
\newcommand{\grandchildset}[1]  		{\mathcal{G}\prl{#1}}
\newcommand{\childCL}[1]		    		{\mathrm{Ch}\prl{#1}}
\newcommand{\ancestorCL}[1]     		{\mathrm{Anc} \left( #1\right )}
\newcommand{\cancCL}[3]          	{\prl{#1  {\wedge}  #2}_{#3}}
\newcommand{\descendantCL}[1]   		{\mathrm{Des} \left( #1\right )}
\newcommand{\complementCL}[1]   		{{#1}^{C}}
\newcommand{\complementLCL}[2]      	{{#1}^{-#2}}
\newcommand{\depth}[1]          {\ell_{#1}}
\newcommand{\NNI}              	{\text{NNI}}
\newcommand{\NNIgraph}      	   	{\mathcal{N}}
\newcommand{\NNIdir}[1]{\NNIgraph_{#1}}
\newcommand{\NNIedgeset}       	{\mathcal{E}}
\newcommand{\NNIDedgeset}        {\widetilde{\mathcal{E}}}
\newcommand{\CommonXSplits}[1]  	{\mathcal{K}\prl{#1}} 
\newcommand{\CommonXCL}         	{K} 
\newcommand{\CLXSplitSet}[1]    	{\mathcal{I}\prl{#1}}
\newcommand{\CLXSplit}          	{I}
\newcommand{\DeepXSplit}[1]     	{\mathcal{R}\prl{#1}}
\begin{document}

\begin{frontmatter}




\title{Discriminative Measures for Comparison of Phylogenetic Trees}


\author{Omur Arslan}
\ead{omur@seas.upenn.edu}

\author{Dan P. Guralnik}
\ead{guralnik@seas.upenn.edu}

\author{Daniel E. Koditschek}
\ead{kod@seas.upenn.edu}

\address{Department of Electrical and Systems Engineering, University of
Pennsylvania, Philadelphia, PA 19104, USA}


\begin{abstract}
In this paper we introduce and study three new measures for efficient discriminative comparison of phylogenetic trees. 
The \emph{NNI navigation dissimilarity} $\distNav$ counts the steps along a ``combing'' of the Nearest Neighbor Interchange (NNI) graph of binary hierarchies, providing an efficient approximation to the (NP-hard) NNI distance in terms of ``edit length''. At the same time, a closed form formula for $\distNav$ presents it as a weighted count of pairwise incompatibilities between clusters, lending it the character of an edge dissimilarity measure as well. A relaxation of this formula to a simple count yields another measure on {\it all} trees~ ---~ the \emph{crossing dissimilarity} $\dist_{CM}$.
Both dissimilarities are symmetric and positive definite (vanish only between identical trees) on binary hierarchies but they fail to satisfy the triangle inequality. Nevertheless, both are bounded below by the widely used Robinson-Foulds metric and bounded above by a closely related true metric, the \emph{cluster-cardinality metric} $\dist_{CC}$.  
We show that each of the three proposed new dissimilarities is computable in time $\bigO{n^2}$ in the number of leaves $n$, and conclude the paper with a brief numerical exploration of the distribution over tree space of these dissimilarities in comparison with the Robinson-Foulds metric and the more recently introduced matching-split distance.
\end{abstract}

\begin{keyword}


Phylogenetic Trees, Evolutionary Trees, Nearest Neighbor Interchange, Comparison of Classifications, Tree Metric.
\end{keyword}

\end{frontmatter}


\section{Introduction}
\label{sec.Introduction}

\subsection{Motivation}

A fundamental classification problem common to both computational biology and engineering is the efficient and informative comparison of hierarchical structures.
In bioinformatics settings, these typically take the form of phylogenetic trees representing evolutionary relationships within a set $\indexset$ of taxa. 
In pattern recognition and data mining settings, hierarchical trees are often used to encode nested sequences of groupings of a set of observations. 
Dissimilarity between combinatorial trees has been measured in the past literature largely by recourse to one of two separate approaches: comparing edges and counting edit distances.
Representing the former approach, a widely used tree metric is the Robinson-Foulds (RF) distance, $\distRF$, \cite{robinson_foulds_mb1981} whose count of the disparate edges between trees requires linear time, $\bigO{n}$, in the number of leaves, $n$, to compute \cite{day_joc1985}.
Empirically,  $\distRF$ offers only a very coarse measure of disparity, and among its many proposed refinements, the recent matching split distance $\distMS$, \cite{bogdanowicz_giaro_TCBB2012, lin_rajan_moret_TCBB2012} offers a more discriminative metric albeit with considerably higher computational cost, $\bigO{n^{2.5} \log n}$.
Alternatively, various edit distances have been proposed \cite{robinson_jct1971, moore_goodman_barnabas_jtb1973, allen_steel_AC2001, felsenstein2004} but the most natural variant, the Nearest Neighbor Interchange (NNI) distance $\distNNI$, entails an NP-complete computation for both labelled and unlabelled trees \cite{dasguptaEtAl_SDM1997}. 

\subsection{Results}

Our main contribution is the introduction of a dissimilarity measure on the space $\BT{\indexset}$ of labelled binary trees which bridges the above approaches by what is, effectively, a solution to the {NNI navigation problem} in $\BT{\indexset}$:
\begin{problem}[NNI Navigation Problem]\label{navigation problem} Given a target $\treeB\in\BT{\indexset}$, provide an efficient algorithm $\mathcal{A}_\treeB$ which, for any $\treeA\in\BT{\indexset}$, computes a Nearest Neighbor Interchange to be performed on $\treeA$ while guaranteeing that successive application of $\mathcal{A}_\treeB$  terminates in $\treeB$.
\end{problem}
\noindent This problem is motivated by applications in coordinated robot navigation \cite{arslanEtAl_Allerton2012, arslan_guralnik_kod_WAFR2014, ayanian_kumar_koditschek_RR2011}, where a group of robots is required to reconfigure reactively in real time their (structural) adjacencies while navigating towards a desired goal configuration. Thus, our particular formulation of the problem is inspired by the notion of reactive planning \cite{burridge_rizzi_kod_IJRR1999}, but may likely hold value for researchers interested in tree consensus and averaging  as well.

Of course, since computation of $\distNNI$ is NP-hard, one cannot hope for repeated applications of $\mathcal{A}_\treeB$ to produce NNI geodesics without incurring prohibitive complexity in each iteration. However, as we will show, constructing an efficient navigation scheme is possible if we allow the algorithm to produce less restricted paths: for $\card{\indexset}=n$, our navigation algorithms require $\bigO{n}$ time for each iteration and produce paths of length $\bigO{n^2}$ (as compared to the $\bigO{n\log n}$ diameter of $\distNNI$~ ---~ see \eqref{eq:diameters}).

Additional insight into the geometry of the space $(\BT{\indexset},\distNNI)$ is gained by recognizing a significant degree of freedom with which our navigation algorithm may select the required tree restructuring operation at each stage. As it turns out, for any given target $\treeB$, the repeated application of $\mathcal{A}_\treeB$ to a tree $\treeA$ until reaching $\treeB$ will yield paths of equal lengths regardless of any choices made along the way. This length, by definition, is the navigation dissimilarity $\distNav\prl{\treeA,\treeB}$ (and is obtained, in the manner described, in $\bigO{n^3}$ time, though more efficient implementations will guarantee $\bigO{n^2}$). At the same time, a closed form formula we derive for $\distNav$ allows us to avoid computing a navigation path when only the value of $\distNav$ is needed, and computes it in $\bigO{n^2}$ time. Surprisingly, despite the asymmetric character of its construction, $\distNav$ is a symmetric (and positive definite) dissimilarity on $\BT{\indexset}$, though it fails to be a metric.

Although $\distNav$ does not satisfy the triangle inequality, it is related to the well accepted Robinson-Foulds distance by the following tight bounds:
\begin{equation}\label{eq:dnav vs dRF}
 \distRF \leq \distNav \leq  \frac{1}{2}\distRF^2 + \frac{1}{2} \distRF ~,
\end{equation}
We find it useful to introduce a ``relaxation'' of $\distNav$, the \emph{crossing dissimilarity} $\distCM$. This dissimilarity simply counts all the pairwise cluster incompatibilities between two trees, hence it is symmetric, positive-definite, and computable in $\bigO{n^2}$ time. In fact, the two dissimilarities are commensurable, leading to similar bounds in terms of $\distRF$:
\begin{equation}
\distCM \leq \distNav \leq \frac{3}{2} \distCM\,,\quad
\distRF \leq \distCM \leq \distRF^2 ~.
\end{equation}
Finally, we introduce  a true metric whose spatial resolution and computational complexity is comparable to those our new dissimilarities. Exploiting a well known relation between trees and ultrametrics \cite{carlsson_memoli_jmlr2010}, we also introduce \emph{the cluster-cardinality distance} $\distCC$~ ---~ constructed as the pullback of a matrix norm along an embedding of hierarchies into the space of matrices and computable in $\bigO{n^2}$ time~ ---~ which is a true metric bounding $\distCM$ from above (and hence also $\distNav$, up to a constant factor). Thus, cumulatively we obtain:
\begin{equation}\label{eq.TreeMeasureOrder}
	\frac{2}{3}\distRF\leq 
	\frac{2}{3}\distNav\leq 
	\distCM\leq
	\distCC ~.
\end{equation}
%

\medskip
We have surveyed some of the new features of our tree proximity measures that might hold interest for pattern classification and phylogeny analysis relative to the diverse alternatives that have appeared in the literature.  
Closest among these many alternatives \cite{li_tromp_zhang_jtb1996, culik_wood_IPL1982, brown_day_JOC1984}, $\distNav$ has some resemblance to an early NNI graph navigation  algorithm, $\dist_{ra}$ \cite{brown_day_JOC1984} which used a divide-and-conquer approach with a balancing strategy to achieve an $\bigO{n \log n}$ computation of tree dissimilarity.
Notwithstanding its lower computational cost, in contrast to $\distNav$, the recursive definition of $\dist_{ra}$, as with many NNI distance approximations \cite{li_tromp_zhang_jtb1996, culik_wood_IPL1982, brown_day_JOC1984}, does not admit a closed form expression. 

It is often of interest to compare more than pairs of hierarchies at a time, and the notion of a ``consensus'' tree has accordingly claimed a good deal of attention in the literature  \cite{bryant_DIMACS2003}. 
For instance, the majority rule tree \cite{margush_mcmorris_BMB1981} of a set of  trees is a median tree respecting the RF distance and provides statistics on the central tendency of trees \cite{barthelemy_mcmorris_JC1986}.
When $\distNav$ and $\distCM$ are extended to degenerate trees they fail to be positive definite, and thus their behavior over (typically degenerate) consensus trees departs still further from the properties of a true metric. 
However, it turns out that both notions of a consensus tree (strict \cite{rohlf_MB1982}, and loose/semi-strict \cite{bremer_C1990}) behave as median trees with respect to both our dissimilarities. 
In fact, the loose consensus tree is the maximal (finest) median tree with respect to inclusion for both $\distNav$ and $\distCM$.


The paper is organized as follows.
\refsec{sec.prelim} briefly summarizes the necessary background while introducing the notation used throughout the sequel.
\refsec{sec.incompatibility} introduces and studies the cluster-cardinality distance $\distCC$ and the crossing dissimilarity $\distCM$.
In \refsec{sec.dnav} we present a solution of the NNI navigation problem and study properties of the resulting NNI navigation dissimilarity $\distNav$ and its relations with other tree dissimilarity measures.
\refsec{sec:Discussion} discusses the relation between commonly used consensus models and our tree dissimilarities $\distCM$ and $\distNav$, and compares our proposed tree measures with $\distRF$ and $\distMS$ based on some frequently used empirical distributions of tree measures.
A brief discussion of future directions follows in \refsec{sec.Conclusion}.

\section{Preliminaries}
\label{sec.prelim}


\subsection{Hierarchies}
\label{sec.Hierarchies}

By a \emph{hierarchy} $\tree$ over a fixed non-empty finite index set $\indexset$ we shall mean a rooted tree with labeled leaves (see \reffig{fig:hierarchicalrelation}). Formally, $\tree$ is a finite connected acyclic graph with leaves (vertices of degree one) bijectively labelled by $\indexset$, and edges oriented in such a way that (i) all interior vertices have out-degree at least two, and (ii) there is a vertex, referred to as the \emph{root of $\tree$}, such that every edge is oriented away from the root. Under these assumptions all the vertices of $\tree$ are reachable from the root through a directed path in $\tree$ \cite{billera_holmes_vogtmann_aap2001}.

The \emph{cluster} $\cluster{v}$ of a vertex $v \in V_{\tree}$ of a hierarchy $\tree$ is defined to be the set of leaves reachable from $v$ by a directed path in $\tree$.
Singleton clusters and the root cluster $\indexset$ are common to all trees, and we refer to them as the trivial clusters. 
We denote by $\cluster{\tree}$ (respectively $\intcluster{\tree}$) the set of all clusters (resp. non-trivial clusters) of $\tree$:
\begin{equation}
\cluster{\tree} \ldf \crl{\cluster{v} \big | \, v \in V_{\tree}} \subseteq \PowerSet{\indexset} ~, \qquad
\intcluster{\tree} \ldf \crl{I \in \cluster{\tree} \setminus \crl{\indexset} \Big | \card{I} \geq 2} ~,
\end{equation}    
where $\PowerSet{\indexset}$ denotes the power set of $\indexset$.

\subsubsection{Compatibility}
\label{sec.compatibility} 

\begin{definition}[\cite{schrijver2003, felsenstein2004}]
\label{def.compatibility}
Subsets $A,B\subset\indexset$ are said to be \emph{compatible}, $A \compatible B$, if
\begin{equation}
A \cap B = \varnothing \; \logicor  \; A \subseteq B  \; \logicor  \; B \subseteq A ~.
\end{equation}
If $A\not\compatible B$, then we say that \emph{$A$ and $B$ cross}. We further extend the compatibility relation $(\compatible)$ as follows:
\begin{itemize}
	\item For $\mathcal{A},\mathcal{B}\subseteq\PowerSet{\indexset}$, write $\mathcal{A}\compatible\mathcal{B}~$ if $~A \compatible B$  for all $A\in \mathcal{A}$ and $B \in \mathcal{B}$;
	\item For a cluster $I\subseteq\indexset$ and a tree $\tree$ over the leaf set $\indexset$, write $I\compatible\tree~$ if $~\crl{I}\compatible\cluster{\tree}$;
	\item For two trees $\treeA$ and $\treeB$ over the leaf set $\indexset$, write $\treeA\compatible\treeB~$ if $~\cluster{\treeA} \compatible \cluster{\treeB}$.
\end{itemize}
\end{definition}

By construction, any two elements of $\cluster{\tree}$ are compatible  for any tree $\tree$. This motivates the following definition: 
\begin{definition}[\cite{schrijver2003}]\label{def.nested}
A subset $\mathcal{A}$ of $\PowerSet{\indexset}$ is said to be \emph{nested}~ ---~ also referred to in the literature as a ``laminar family''~ ---~ if any two elements of $\mathcal{A}$ are compatible. $\cluster{\tree}$ is known as the laminar family associated with $\tree$ .
\end{definition}

\subsubsection{Hierarchical Relations}
\label{sec.HierarchicalRelations }

\begin{figure}
\centering
\includegraphics[width=0.45\textwidth]{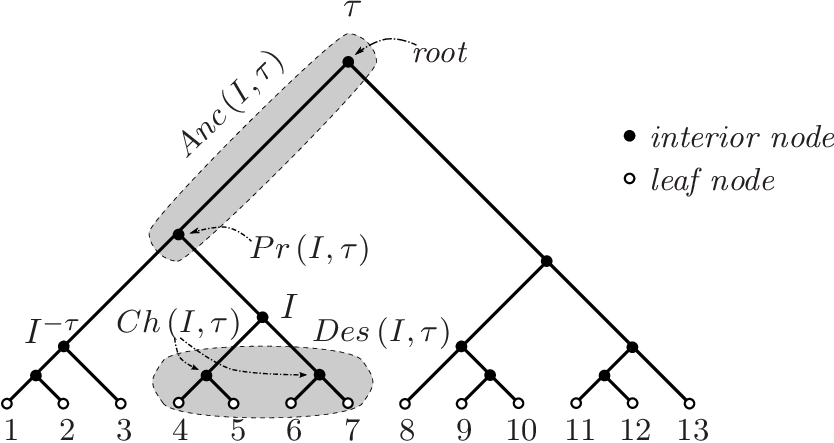} 
\caption{Hierarchical Relations: ancestors - $\ancestorCL{I,\tree}$, parent - $\parentCL{I,\tree}$, children - $\childCL{I,\tree}$,  descendants - $\descendantCL{I,\tree}$, and local complement (sibling) - $\complementLCL{I}{\tree}$  of cluster $I$ of a rooted binary phylogenetic tree, $\tree \in \bintreetopspace_{\brl{13}}$. 
Filled and unfilled circles represent interior and leaf nodes, respectively. 
An interior node is referred to by its cluster, the list of leaves below it; for example, $I = \crl{4,5,6,7}$. 
Accordingly, the cluster set of $\tree$ is $\cluster{\tree} = \left \{ \big. \right.\!\! \crl{1}, \crl{2}, \ldots, \crl{13}, \crl{1,2}, \crl{1,2,3}, \crl{4,5}, \crl{6,7}, \crl{4,5,6,7}, \crl{1,2, \ldots, 7}, \crl{9, 10},  \crl{8,9,10},  \crl{11,12}, \crl{11,12,13}, \crl{8,9, \ldots, 13}, \crl{1,2, \ldots, 13} \!\! \left . \big.\right\}$.
}
\label{fig:hierarchicalrelation}
\vspace{-1mm}
\end{figure}

The cluster set $\cluster{\tree}$ of a hierarchy $\tree$ completely determines its representation as a rooted tree with labeled leaves: $\cluster{\tree}$ stands in bijective correspondence with the vertex set of $\tree$, and $(v, v')$ is an edge in $\tree$ if and only if $\cluster{v}\supset \cluster{v'}$ and there is no $\tilde{v} \in V_{\tree}$ such that $\cluster{v}\supset \cluster{\tilde{v}} \supset \cluster{v'}$. Consequently, the standard notions of ancestor, descendant, parent and child of a vertex in common use for rooted trees carry over to the cluster representation as follows:

\begin{subequations}
\noindent
\begin{align}
\ancestorCL{I,\tree} &= \left \{ V \hspace{-0.5mm} \in \hspace{-0.5mm} \cluster{\tree} \big | \, I \subsetneq V \right \},
&
\descendantCL{I,\tree} &= \crl{V \hspace{-0.5mm} \in \hspace{-0.5mm} \cluster{\tree}  \big | \, V \subsetneq I},\\
\parentCL{I,\tree}  
 &= \min\prl{\ancestorCL{I,\tree}},
&
\childCL{I, \tree} &= \crl{V \in \cluster{\tree} \big |\, \parentCL{V,\tree} = I },
\end{align}
\end{subequations}
where $\min\prl{\ancestorCL{I,\tree}}$ is computed with respect to the inclusion order. Note that for the trivial clusters we have $\parentCL{\indexset, \tree} = \varnothing$ and $\childCL{\{s\},\tree}=\varnothing$ for $s\in\indexset$.

Since the set of children partitions each parent, we find it useful to define the \emph{local complement} $\complementLCL{I}{\tree}$ of $I \in \cluster{\tree}$ as
\begin{equation}\label{eq:compLCL}
\complementLCL{I}{\tree} \ldf \parentCL{I,\tree}\setminus I ~,
\end{equation}
not to be confused with the standard (global) complement, $\complementCL{I}= \indexset \setminus I$.
Further, a grandchild in $\tree$ is a cluster $G \in \cluster{\tree}$ having a grandparent $\grandparentCL{G,\tree} \ldf \parentCL{\big. \parentCL{G,\tree}, \tree}$ in $\tree$. 
We denote the set of all grandchildren in $\tree$  by $\grandchildset{\tree}$,
\begin{equation} \label{eq.grandchild}
\grandchildset{\tree} \ldf \crl{G \in \cluster{\tree} \big | \, \grandparentCL{G,\tree} \neq \varnothing} ~.
\end{equation} 
If $A,B$ are either elements of $\indexset$ or clusters of $\tree$, it is convenient to have $\cancCL{A}{B}{\tree}$ denote the smallest (in terms of cardinality) common ancestor of $A$ and $B$ in $\tree$. Finally, the depth $\depth{\tree}(I)$ of a cluster in a hierarchy $\tree$ is defined to equal the number of distinct ancestors of $I$ in $\tree$.

\subsubsection{Nondegeneracy}
\label{sec.nondegeneracy}

A rooted tree where every interior vertex has exactly two children is said to be \emph{binary} or \emph{non-degenerate}. All other trees are said to be \emph{degenerate}. 
We will denote the set of hierarchies over a finite leaf set $\indexset$, by $\treetopspace_{\indexset}$. The subset of non-degenerate hierarchies will be denoted by $\bintreetopspace_{\indexset}$.

Note that the laminar family $\cluster{\tree}$ of a degenerate tree $\tree$ may always be augmented with additional clusters while remaining nested (\refdef{def.nested}). This leads to the well known result:
\begin{remark}[\cite{vogtmann_2007, schrijver2003}]\label{rem:MaximumCardinality}
Let $\tree \in \treetopspace_{\indexset}$. 
Then $\tree$ has at most $2\card{\indexset}-1$ vertices, with equality if and only if $\tree$ is nondegenerate, if and only if $\cluster{\tree}$ is a maximal laminar family in $\PowerSet{\indexset}$ with respect to inclusion.\footnote{In this paper we adopt the convention that a laminar family does not contain the empty set (as an element).}
\end{remark}

%
%

\subsubsection{Consensus}
\label{sec.consensus}

\begin{definition}[\cite{rohlf_MB1982, bremer_C1990}] \label{def.StrictLooseConsensus}
For any set of trees $T$ in $\treetopspace_{\indexset}$, the strict and loose consensus trees of $T$, denoted $T_{\ast}$ and $T^\ast$ respectively, are defined by specifying their cluster sets as follows:
\begin{equation}
\cluster{T_{\ast}} = \bigcap_{\tree \in T} \cluster{\tree}\,,\quad
\cluster{T^{\ast}} = \crl{I \in \bigcup_{\treeB \in T} \cluster{\treeB} \Bigg | \, \forall \treeA \in T  \quad I \compatible \treeA  }.
\end{equation}
\end{definition}

\noindent Note that the loose consensus tree $T^{\ast}$ of $T$ refines the strict consensus tree $T_{\ast}$, that is $\cluster{T^{\ast}} \supseteq \cluster{T_{\ast}}$.

\subsection{Some Operations on Trees}
\label{sec.TreeOperations}

\subsubsection{The NNI Graph}
\label{sec.NNIGraph} 

The standard definition of NNI walks on unrooted binary trees \cite{robinson_jct1971, moore_goodman_barnabas_jtb1973} conveniently restricts to the space $\bintreetopspace_{\indexset}$ of rooted binary trees as follows:

\begin{definition}\label{def:NNIMove}
Let $\treeA \in \bintreetopspace_{\indexset}$.
We say that $\treeB \in \bintreetopspace_{\indexset}$ is the result of performing a \emph{Nearest Neighbor Interchange (NNI) move} on $\treeA$ at a grandchild $G \in \grandchildset{\treeA}$ \refeqn{eq.grandchild} if  
\begin{equation}
\cluster{\treeB} = \prl{ \Big. \cluster{\treeA} \setminus \crl{\big.\parentCL{G,\treeA}}} \cup \crl{\big.\grandparentCL{G,\treeA} \setminus G}. \label{eq.NNImove}
\end{equation}
We often indicate this by writing $\treeB=\NNI(\treeA,G)$.
\end{definition}
\noindent Note that the NNI move at cluster $G $ on $\treeA$ swaps cluster $G$ with its parent's sibling $\complementLCL{\parentCL{G,\treeA}}{\treeA}$ to yield $\treeB$, depicted in \reffig{fig.NNImoveNNIGraph}(left); and after an NNI move at cluster $G$ of $\treeA$, grandchild $G$ of grandparent $P =  \grandparentCL{G,\treeA}$   with respect to $\treeA$ becomes  child $G$ of parent $P = \parentCL{G,\treeB}$ with respect to $\treeB$.

It is standard to say that $\treeA,\treeB \in \BT{\indexset}$ are NNI-adjacent if and only if one can be obtained from the other by a single move. \reffig{fig.NNImoveNNIGraph}(left) illustrates the moves on $\BT{\indexset}$ and their inverses. 

\begin{figure}[htb]
\centering
\begin{tabular}{c@{\hspace{15mm}}c}
\raisebox{2mm}{\includegraphics[width=0.35\textwidth]{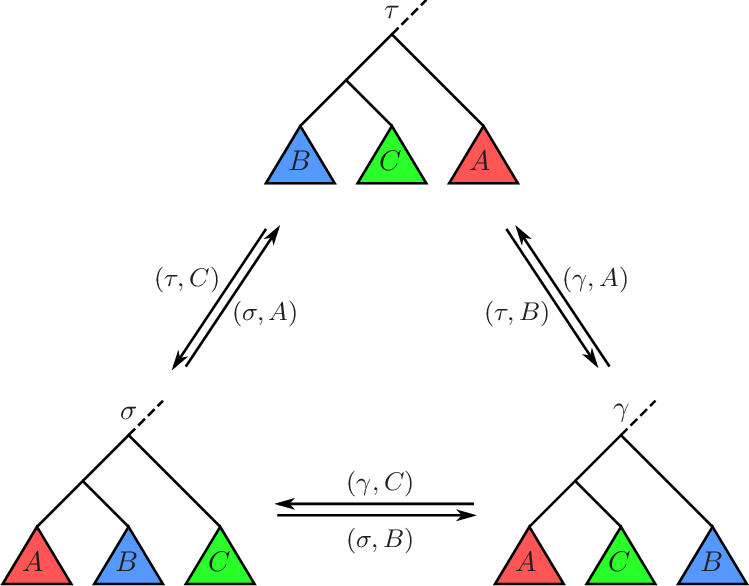}}  & 
\includegraphics[width=0.445\textwidth]{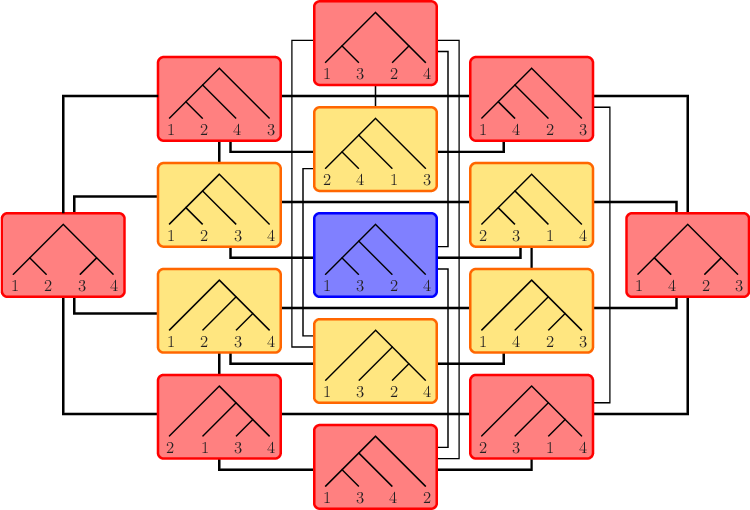} 
\end{tabular}
\caption{NNI moves (arrows, left) between binary trees, each move is labeled by its source tree and the grandchild defining the move, and
the NNI Graph for $\indexset = [4] = \crl{1,2,3,4}$ (right).}
\label{fig.NNImoveNNIGraph}
\end{figure}

The NNI-graph is formed over the vertex set $\BT{\indexset}$ by declaring two trees to be connected by an edge if and only if they are NNI-adjacent, see e.g. \reffig{fig.NNImoveNNIGraph}(right). We will work with a directed version of this graph:
\begin{definition}\label{def:NNIGraph}
The \emph{directed NNI graph} $\NNIgraph_{\indexset} = \prl{\BT{\indexset}, \NNIedgeset_{\indexset}}$ is the directed graph on $\BT{\indexset}$ with $(\treeA,\treeB)\in\NNIedgeset_{\indexset}$ iff $\treeB$ results from applying an NNI move to $\treeA$. We will henceforth identify the notation for an NNI move $(\treeA,G)$, $G\in\grandchildset{\treeA}$ with the directed edge $\prl{\treeA,NNI(\treeA,G)}\in\NNIedgeset_{\indexset}$ wherever there is no danger of confusion.
\end{definition}

The (directed) NNI-graph on $n$ leaves is a regular graph of out-degree $2(n-2)$ \cite{robinson_jct1971}. Our description clarifies this by parametrizing the set of neighbors of $\tree\in\bintreetopspace_{\indexset}$ with its grandchildren, $\card{\grandchildset{\tree}} = 2(\card{\indexset} - 2)$. The vertex set of the NNI graph is known to grow super exponentially with the number of leaves~\cite{billera_holmes_vogtmann_aap2001},
\begin{equation}\label{eqn:NumEvolutionaryTree}
	\card{\bintreetopspace_{[n]}} = (2n - 3)!! = (2n-3)(2n-5) \ldots 3 ~,\quad n \geq 2 ~. 
\end{equation}  
As a result, exploration of the NNI-graph (for example, searching for the shortest path between hierarchies or an optimal phylogenetic tree model) rapidly becomes impractical and costly as the number of leaves increases.
A useful observation for NNI-adjacent trees is: 
\begin{lemma}\label{lem.NNITriple}
An ordered pair of hierarchies $\prl{\treeA,\treeB}$ is an edge in $\NNIgraph_{\indexset}$ if and only if there exists an ordered triple $\prl{A,B,C}$ of common clusters of $\treeA$ and $\treeB$ such that  $\crl{A \cup B} = \cluster{\treeA} \setminus \cluster{\treeB}$ and $\crl{B\cup C} = \cluster{\treeB}\setminus \cluster{\treeA}$. 
The triple $\prl{A,B,C}$ is uniquely determined by $\prl{\treeA,\treeB}$ and will be referred to as  the \emph{NNI-triplet} associated with  $\prl{\treeA,\treeB}$.
\end{lemma}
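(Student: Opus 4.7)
The plan is to prove both directions of the equivalence; the forward direction and the uniqueness claim are straightforward, while the backward direction requires a careful identification of the NNI grandparent from the symmetric difference of cluster sets.

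For the forward direction, suppose $(\treeA, \treeB) \in \NNIedgeset$, so $\treeA$ arises from $\treeB$ via an NNI move at some $G \in \grandchildset{\treeB}$ per \refeqn{eq.NNImove}. Setting $P = \parentCL{G,\treeB}$, $Q = \grandparentCL{G,\treeB}$, and $C = G$, $B = P \setminus G$, $A = Q \setminus P$, I obtain three clusters of $\treeB$ that are fixed by the NNI and hence are also clusters of $\treeA$; substituting into \refeqn{eq.NNImove} verifies that $A \cup B = Q \setminus G$ is the unique element of $\cluster{\treeA} \setminus \cluster{\treeB}$ and $B \cup C = P$ is the unique element of $\cluster{\treeB} \setminus \cluster{\treeA}$. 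For uniqueness, observe that $A, B, C$ must be pairwise disjoint in any valid triple: a non-trivial inclusion would force $A \cup B$ or $B \cup C$ to collapse onto an already common cluster (for example, $A \subseteq B$ gives $A \cup B = B \in \cluster{\treeA} \cap \cluster{\treeB}$), and cluster compatibility in the nested families $\cluster{\treeA}$ and $\cluster{\treeB}$ excludes the remaining crossing configurations. Given disjointness, if $(A',B',C')$ is another valid triple then $B' = (A' \cup B') \cap (B' \cup C') = (A \cup B) \cap (B \cup C) = B$, and the remaining two entries follow.

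For the backward direction, given a triple $(A,B,C)$ satisfying the hypotheses, first derive pairwise disjointness as above and set $P = B \cup C$ and $P' = A \cup B$. The key step is to identify the common grandparent structure. Since $\cluster{\treeB} \setminus \crl{P} = \cluster{\treeA} \cap \cluster{\treeB} = \cluster{\treeA} \setminus \crl{P'}$, both restrictions define the same semi-labelled tree $\widetilde{\tree}$, and in $\widetilde{\tree}$ the vertex that is the parent of $B$ inherits its children from both perspectives. Contracting the edge above $P$ in $\treeB$ gives these children as $\crl{B,C,Y}$ where $Y$ is the sibling of $P$ in $\treeB$; contracting the edge above $P'$ in $\treeA$ gives them as $\crl{A,B,X}$ where $X$ is the sibling of $P'$ in $\treeA$. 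Equating $\crl{B,C,Y} = \crl{A,B,X}$ and ruling out the impossible match $A = C$ (disjoint and non-empty) forces $X = C$, $Y = A$, and the grandparent $Q = A \cup B \cup C$. Hence $C$ is a grandchild of $\treeB$ with $\parentCL{C,\treeB} = P$ and $\grandparentCL{C,\treeB} = A \cup B \cup C$, and the NNI move at $G = C$ yields $\prl{\cluster{\treeB}\setminus \crl{P}} \cup \crl{(A \cup B \cup C)\setminus C} = \cluster{\treeA}$ via \refeqn{eq.NNImove}, establishing $(\treeA,\treeB) \in \NNIedgeset$.

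The main technical hurdle is precisely this grandparent identification: a priori the cluster $A$ could lie arbitrarily deep in $\treeB$ (or $C$ in $\treeA$), producing many combinatorial configurations to rule out. The restriction-to-common-clusters argument collapses all of them into a single degenerate tree in which the three neighbors of $B$ are determined twice, once from each side, and a short symbolic comparison finishes the job without any case analysis on depths.
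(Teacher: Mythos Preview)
Your proof is correct. For the forward direction and the uniqueness of the triple you follow essentially the same route as the paper (construct $A,B,C$ from the grandchild/sibling/co-sibling of the NNI move and recover $B$ as $(A\cup B)\cap(B\cup C)$). The backward direction is where your argument differs: the paper disposes of it in one line, asserting that ``the cluster sets of a pair of hierarchies differ exactly by one cluster if and only if they are joined by an NNI edge'' and calling this evident from \refdef{def:NNIMove}. You instead give an actual proof by passing to the common (degenerate) tree $\widetilde\tree$ with cluster set $\cluster{\treeB}\setminus\{P\}=\cluster{\treeA}\setminus\{P'\}$ and reading off the three children of the unique ternary vertex from both sides; the forced match $\{B,C,Y\}=\{A,B,X\}$ pins down the grandparent $A\cup B\cup C$ without any case analysis on the depth of $A$ in $\treeB$. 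This buys you a self-contained argument where the paper leans on an unproved (if standard) fact; the paper's version buys brevity.
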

\begin{proof} The proof amounts to a formal restatement of the observations made in \reffig{fig.NNImoveNNIGraph}(left). See \refapp{app.NNITriple}.
\end{proof}

\noindent Observe that the triplet in reverse order $\prl{C,B,A}$ is the NNI-triple associated with the edge $\prl{\treeB, \treeA}$. 
Also note that the NNI moves on $\treeA$ at $A$ and on $\treeB$ at $C$ yield $\treeB$ and $\treeA$, respectively.

\subsubsection{Tree Restriction}
\label{sec.TreeRestriction} 

\begin{definition}\label{def:TreeRestriction}
Let $\indexset$ be a fixed finite set and $K \subseteq \indexset$.
The restriction map $\treeres{K}:\PowerSet{\indexset} \rightarrow \PowerSet{K}$ is defined to be
\begin{equation}\label{eq.TreeRestriction}
\treeres{K}\prl{\mathcal{A}} \ldf \crl{A \cap K  \, \big | \, A \in \mathcal{A} \, , \; A \cap K \neq \varnothing}
\end{equation}
for any $\mathcal{A} \subseteq \PowerSet{\indexset}$.
It is convenient to have $\treeressym{K}{\mathcal{A}}$ denote $\treeres{K}\prl{\mathcal{A}}$.
For $\treeA \in \treetopspace_{K}$ and $\treeB \in \treetopspace_{\indexset}$ we will write:
\begin{equation}
\treeA= \treeres{K}\prl{\treeB} = \treeressym{K}{\treeB}  \Longleftrightarrow \cluster{\treeA}= \treeressym{K}{\cluster{\treeB}} ~.
\end{equation}
%
\end{definition}

\begin{remark} \label{rem:ClusterConcatenation}
Let $\tree \in \bintreetopspace_{\indexset}$ and $\crl{\indexset_L, \indexset_R} = \childCL{\indexset, \tree}$. 
Then one has  $\cluster{\tree} = \cluster{\treeressym{\indexset_L}{\tree}} \cup \crl{\big.\indexset} \cup \cluster{\treeressym{\indexset_{R}}{\tree}}$ .
%
\end{remark}

\begin{lemma}\label{lem:BinTreeRestriction}
For any  finite set $\indexset$ and  $K \subseteq \indexset$ with $\card{K}\geq 2$, $\treeres{K}\prl{\bintreetopspace_{\indexset}}= \bintreetopspace_{K}$. 
\end{lemma}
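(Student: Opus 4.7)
Proof Plan for Lemma 2:

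My plan is to prove the two inclusions $\treeres{K}\prl{\bintreetopspace_{\indexset}} \subseteq \bintreetopspace_{K}$ and $\bintreetopspace_{K} \subseteq \treeres{K}\prl{\bintreetopspace_{\indexset}}$ separately.

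For the forward inclusion, I would use induction on $\card{\indexset}$, leveraging the cluster-concatenation decomposition in Remark \ref{rem:ClusterConcatenation}. The base case $\card{\indexset}=2$ forces $K=\indexset$, for which $\treeres{K}$ is the identity. For the inductive step, fix $\tree \in \bintreetopspace_\indexset$ with $\childCL{\indexset,\tree} = \crl{\indexset_L, \indexset_R}$, and split on how $K$ meets the two branches. If $K \subseteq \indexset_L$, then every cluster $V \subseteq \indexset_R$ contributes nothing under $\treeres{K}$, while $\indexset \cap K = \indexset_L \cap K = K$ collapse to a single root cluster; hence $\treeres{K}\prl{\tree} = \treeres{K}\prl{\treeressym{\indexset_L}{\tree}}$ and the inductive hypothesis applies. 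The symmetric case is identical. Otherwise, setting $K_L = K\cap \indexset_L$ and $K_R = K\cap \indexset_R$, Remark \ref{rem:ClusterConcatenation} and the definition of $\treeres{K}$ give
\begin{align*}
\treeres{K}\prl{\cluster{\tree}} = \treeres{K_L}\prl{\cluster{\treeressym{\indexset_L}{\tree}}} \cup \crl{K} \cup \treeres{K_R}\prl{\cluster{\treeressym{\indexset_R}{\tree}}}.
\end{align*}
For each side, either $\card{K_L}\geq 2$ (apply the inductive hypothesis) or $\card{K_L}=1$ (a trivial leaf tree, which is vacuously binary), and similarly for $K_R$. Concatenating these two binary subtrees under the root cluster $K$ yields a non-degenerate tree on $K$, so $\treeres{K}\prl{\tree} \in \bintreetopspace_K$.

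For the reverse inclusion, I would give an explicit extension. Given $\treeA \in \bintreetopspace_K$, enumerate $\indexset \setminus K = \crl{x_1, \ldots, x_m}$ (possibly empty) and iteratively build $\treeB_0 = \treeA$ and $\treeB_i$ from $\treeB_{i-1}$ by introducing a new root whose children are $\treeB_{i-1}$ and the singleton leaf $\crl{x_i}$. Each $\treeB_i$ is non-degenerate because every interior vertex retains exactly two children, and $\treeB_m \in \bintreetopspace_\indexset$. A direct check using the definition of $\treeres{K}$ shows that at each step the newly added root cluster restricts back to $K$ (already present in $\cluster{\treeA}$) and each $\crl{x_i}$ restricts to $\emptyset$ (hence is discarded), so $\treeres{K}\prl{\treeB_m} = \treeA$.

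The main obstacle I anticipate is the case analysis in the forward direction: one must be careful that restriction does not inadvertently collapse an interior vertex of $\tree$ into one of its children in a way that produces a degenerate output, and that the two subtrees produced by the inductive decomposition truly assemble into a non-degenerate tree under the root $K$. The case $K \subseteq \indexset_L$ (or $\indexset_R$) is the one where this collapsing occurs, and handling it via the equality $\treeres{K}\prl{\tree} = \treeres{K}\prl{\treeressym{\indexset_L}{\tree}}$ is the cleanest way to dispose of it before invoking the inductive hypothesis.
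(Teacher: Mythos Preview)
Your proof is correct; both inclusions go through as you describe.

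Your forward inclusion takes a genuinely different route from the paper's. You argue by induction on $\card{\indexset}$, splitting at the root via Remark~\ref{rem:ClusterConcatenation} and reducing to restrictions of the two subtrees. The paper instead gives a direct local argument: for each interior cluster $I\cap K$ of $\treeressym{K}{\treeB}$ it locates the unique cluster $A \in \cluster{\treeB}$ with $A\cap K = I\cap K$ whose two children both meet $K$ nontrivially, thereby exhibiting the bipartition $\childCL{I\cap K,\treeressym{K}{\treeB}}$ explicitly. Your inductive decomposition is arguably cleaner once the case analysis ($K\subseteq \indexset_L$ versus $K$ meeting both sides, and $\card{K_L}=1$ versus $\card{K_L}\geq 2$) is set up, while the paper's argument is more uniform but requires verifying existence and uniqueness of the descendant $A$. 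For the reverse inclusion the two approaches are essentially the same: the paper attaches $\treeA$ and an arbitrary $\treeC \in \bintreetopspace_{\indexset\setminus K}$ under a common root in one step, and your iterated caterpillar construction simply corresponds to a particular choice of $\treeC$.
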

\begin{proof} See \refapp{app.BinTreeRestriction}.
\end{proof}

\subsection{Dissimilarities, Metrics and Ultrametrics}
\label{sec.Dissimilarity}

Recall that a \emph{dissimilarity measure} on $X$, or simply a \emph{dissimilarity}, is a real-valued nonnegative symmetric function $\dist$ on $X\times X$ satisfying $\dist\!\prl{x,x} = 0$ for all $x\in X$. 
Recall that a dissimilarity $\dist$ on $X$ is \emph{positive definite} if $\dist\prl{x,y} = 0$ implies $x=y$ for all $x,y \in X$. Many approximations of the (NP-hard) NNI metric are positive definite dissimilarities \cite{li_tromp_zhang_jtb1996, culik_wood_IPL1982, brown_day_JOC1984}. A dissimilarity $\dist$ is a metric if it satisfies the triangle inequality, $\dist\prl{x,y} \leq \dist\prl{x,z} + \dist\prl{z,y}$ for all $x,y,z \in X$. For example:
\begin{definition}[\cite{robinson_foulds_mb1981} and\cite{bogdanowicz_giaro_TCBB2012,lin_rajan_moret_TCBB2012}] \label{def.RF and MS distances}
The Robinson-Foulds distance $\distRF$ on $\treetopspace_{\indexset}$ is defined by: \footnote{Here, $\ominus$ denotes the symmetric set difference, i.e. $A \ominus B = \prl{A \setminus B} \cup \prl{B \setminus A}$ for any sets $A$ and $B$. }
\begin{equation}\label{eq.RFdistance}
\distRF\prl{\treeA,\treeB} = \frac{1}{2} \card{\big. \cluster{\treeA} \ominus \cluster{\treeB}}\,,\quad\treeA,\treeB \in \treetopspace_{\indexset} ~. 
\end{equation}  

The \emph{matching split distance} $\distMS$ between a pair of hierarchies $\treeA$ and $\treeB$ in $\BT{\indexset}$ is defined to be the value of a minimum-weighted perfect matching in the graph $G_\indexset\!\prl{\treeA,\treeB}$ obtained from $\treeA, \treeB \in \BT{\indexset}$ as the complete bipartite graph with sides $\intcluster{\treeA}$ and $\intcluster{\treeB}$ with each edge $\prl{I,J} \in \intcluster{\treeA} \times \intcluster{\treeB}$ carrying the weight \footnote{This corresponds to the Hamming distance of clusters.} $A_S\prl{I,J} = \min\prl{\card{\Big.I \ominus J} , \card{I\ominus \complementCL{J}} }$.
\end{definition}

It is known that  $\distRF \leq \distMS \leq \frac{\card{\indexset} + 1}{2} \distRF$ \cite{bogdanowicz_giaro_TCBB2012}, which explains the improvement of $\distMS$ over $\distRF$ in discriminative power. At the same time, the cost of computing a minimum weighted perfect matching in any $G_\indexset\!\prl{\treeA,\treeB}$ is $\bigO{\card{\indexset}^{2.5}\log \card{\indexset}}$, which motivates the search for dissimilarities producing similar improvement in discriminative power (bounding $d_{RF}$ from above) yet having a lower computational cost than that of $\distMS$.

\medskip
Recall that an \emph{ultrametric} $\dist$ on $X$ is a metric on $X$ satisfying the strengthened triangle inequality, $\dist\prl{x,y} \leq \max\prl{\big. \dist\prl{x,z},\dist\prl{z,y}}$ for all $x,y,z \in X$. The following is a restatement of a well known fact (see, e.g. \cite{carlsson_memoli_jmlr2010, jain_dubes_1988, rammal_et_al_RMP1986}) revealing the relation between hierarchies and ultrametrics:
\begin{lemma}\label{lem.treeUltrametric}
Let $\tree\in\treetopspace_{\indexset}$ and $h_{\tree}:\cluster{\tree} \rightarrow \R_{\geq 0}$.
For any $i,j\in\indexset$ let $\cancCL{i}{j}{\tree}$ denote the smallest cluster in $\cluster{\tree}$ containing the pair $\{i,j\}$. 
Then the dissimilarity on $\indexset$ given by 
\begin{equation} \label{eq.treeUltrametric}
 \dist_{\tree}\prl{i,j}  \ldf h_{\tree} \prl{\big. \cancCL{i}{j}{\tree}}, \quad i,  j \in \indexset, 
\end{equation}  
is an ultrametric if and only if the following are satisfied for any $ I,J \in \cluster{\tree}$:
\begin{enumerate}[label=(\alph*)] 
\item \label{it.heightMonotonicity} if $I \subseteq J$, then $h_{\tree}\prl{I} \leq h_{\tree}\prl{J}$ ,
\item \label{it.heightIndiscernible}$h_{\tree}\prl{I} = 0$ if and only if $\card{I}= 1$ .
\end{enumerate}
\end{lemma}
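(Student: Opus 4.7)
The plan is to prove the two implications separately; in both directions the key fact I will exploit is that in a tree every non-singleton cluster arises as the least common ancestor of two suitably chosen leaves.

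For the forward direction (necessity), assume $\dist_{\tree}$ is an ultrametric. I would first observe that every non-singleton cluster $I \in \cluster{\tree}$ equals $\cancCL{i}{j}{\tree}$ for some pair $i,j \in \indexset$: since $I$ has at least two children whose restrictions to $\indexset$ partition $I$, it suffices to pick one leaf from each of two distinct children. Using this, condition \ref{it.heightIndiscernible} is essentially immediate---$h_{\tree}(\{i\}) = \dist_{\tree}(i,i) = 0$, and for $\card{I}\geq 2$ the pair $i\neq j$ produced above gives $h_{\tree}(I) = \dist_{\tree}(i,j) > 0$ by positive definiteness of a metric. To obtain \ref{it.heightMonotonicity}, given $I \subsetneq J$ with $\card{I}\geq 2$, I would select $i,i' \in I$ lying in distinct children of $I$ (so $\cancCL{i}{i'}{\tree} = I$) together with a third leaf $j \in J$ outside the child of $J$ containing $I$ (so $\cancCL{i}{j}{\tree} = J = \cancCL{i'}{j}{\tree}$), and then apply the ultrametric inequality on $\{i,i',j\}$ to deduce $h_{\tree}(I) \leq h_{\tree}(J)$. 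The case $\card{I}=1$ reduces trivially to $0 \leq h_{\tree}(J)$ via \ref{it.heightIndiscernible}.

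For the backward direction (sufficiency), assume \ref{it.heightMonotonicity} and \ref{it.heightIndiscernible}. Symmetry of $\dist_{\tree}$ is built into the definition, and \ref{it.heightIndiscernible} supplies the separation axiom $\dist_{\tree}(i,j) = 0 \iff \cancCL{i}{j}{\tree} = \{i\} = \{j\} \iff i = j$. The strong triangle inequality is the crux: for any three leaves $i,j,k$, let $L \in \cluster{\tree}$ be the smallest cluster containing $\{i,j,k\}$. By minimality, no single child of $L$ contains all three, so up to relabelling two of them, say $i,j$, lie in one child $C$ of $L$ while $k \notin C$. This forces $\cancCL{i}{k}{\tree} = L = \cancCL{j}{k}{\tree}$ and $\cancCL{i}{j}{\tree} \subseteq C \subsetneq L$, and then monotonicity yields $\dist_{\tree}(i,j) \leq h_{\tree}(L) = \dist_{\tree}(i,k) = \dist_{\tree}(j,k)$, which is in fact stronger than the required $\dist_{\tree}(i,j) \leq \max\bigl(\dist_{\tree}(i,k),\dist_{\tree}(j,k)\bigr)$.

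The main delicate point I foresee is verifying, in the forward direction, that every non-singleton cluster $I$ is realizable as a least common ancestor of two leaves of $\indexset$; this rests on each child of $I$ containing at least one element of $\indexset$, which is built into the definition of a hierarchy in \refsec{sec.Hierarchies}. Once this is secured, the remainder is a matter of carefully tracking the three chosen leaves and directly applying the definitions of ultrametric, $\dist_{\tree}$, and common ancestor.
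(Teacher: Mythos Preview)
Your proposal is correct and follows essentially the same route as the paper, with only minor organizational differences. In the necessity direction the paper verifies monotonicity one step at a time, showing $h_{\tree}(I)\le h_{\tree}\prl{\parentCL{I,\tree}}$ by picking the third leaf $k\in\compLCL{I}{\tree}$ and leaving the general $I\subseteq J$ to transitivity; you instead argue directly for arbitrary $I\subsetneq J$ by choosing $j$ outside the child of $J$ containing $I$, which is slightly more explicit. In the sufficiency direction the paper fixes $I=\cancCL{i}{j}{\tree}$ and cases on whether $k\in I$, while you start from the smallest cluster $L$ containing all three leaves; these are the same argument viewed from different ends.

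One small gap to patch: in your sufficiency argument you infer from ``no single child of $L$ contains all three'' that ``two of them lie in one child $C$''. Since $\tree\in\treetopspace_{\indexset}$ need not be binary, it is possible that $i,j,k$ lie in three pairwise distinct children of $L$. In that case $\cancCL{i}{j}{\tree}=\cancCL{i}{k}{\tree}=\cancCL{j}{k}{\tree}=L$ and the strong triangle inequality holds with equality, so the fix is a one-line observation; the paper's case analysis covers this explicitly.
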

\begin{proof} 
See \refapp{app.treeUltrametric}.
\end{proof}

Recall that a set $X$ may always inherit a metric from a metric space $\prl{Y,\dist_{Y}}$ by pullback: any injective map $f$ of $X$ into $Y$ yields a metric $\dist_{X}$ on $X$ defined by $\dist_{X}\prl{x_1, x_2} \ldf \dist_{Y}\prl{f\prl{x_1}, f\prl{x_2}}$ and known as the pullback $\dist_{X}=f^\ast\dist_{Y}$ of $\dist_Y$ along $f$. For example, the RF metric is a pullback: it is common knowledge that the set $F\prl{X}$ of all finite subsets of a set $X$ forms a metric space under the metric $\dist\prl{A,B}= \card{A \ominus B}$, which is one of the ways of defining Hamming distance; thus, the RF distance is (one half times) the pullback of this metric on $F\prl{\PowerSet{\indexset}}$ under the map $\tree \mapsto \cluster{\tree}$.

\section{Quantifying Incompatibility}
\label{sec.incompatibility}


\subsection{The Cluster-Cardinality Distance}
\label{sec.CCdistance}
 
We now introduce an embedding of hierarchies into the space of matrices based on the relation between  hierarchies and ultrametrics, summarized in \reflem{lem.treeUltrametric}: 
\begin{definition}\label{def.UltrametricRepn}
The \emph{ultrametric representation} is the map $\mat{U} : \treetopspace_{\indexset} \rightarrow \R^{\card{\indexset}  \times \card{\indexset} }$ defined by $\mat{U}(\tree)_{ij} \ldf h \prl{ \big. \cancCL{i}{j}{\tree}}$, where $h : \PowerSet{\indexset} \rightarrow \N$ is set to be $h\prl{I} \ldf  \card{I} - 1$, $I \subseteq \indexset$.
\end{definition}
%

\begin{lemma}\label{lem.CCembedding}
	The map $\mat{U}$ is injective.
\end{lemma}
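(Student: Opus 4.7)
The plan is to show that $\cluster{\tree}$ can be fully reconstructed from the matrix $\mat{U}(\tree)$. Since a hierarchy is uniquely determined by its cluster set (by the bijective correspondence between $\cluster{\tree}$ and $V_{\tree}$ reviewed in \refsec{sec.HierarchicalRelations }), this reconstruction immediately yields injectivity: $\mat{U}(\treeA) = \mat{U}(\treeB)$ forces $\cluster{\treeA} = \cluster{\treeB}$ and hence $\treeA = \treeB$.

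First, I would fix a leaf $i \in \indexset$ and examine the $i$-th row of $\mat{U}(\tree)$, which encodes the function $j \mapsto \card{\cancCL{i}{j}{\tree}} - 1$. The key reconstruction claim is that for every $I \in \cluster{\tree}$ with $i \in I$,
\begin{align}
I = \crl{j \in \indexset \,\big|\, \mat{U}(\tree)_{ij} \leq \card{I} - 1}. \nonumber
\end{align}
The inclusion $(\subseteq)$ is immediate: if $j \in I$ then $I$ is a common ancestor of $i$ and $j$, so $\cancCL{i}{j}{\tree} \subseteq I$ and therefore $\mat{U}(\tree)_{ij} = \card{\cancCL{i}{j}{\tree}} - 1 \leq \card{I} - 1$. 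For the reverse inclusion $(\supseteq)$, note that both $\cancCL{i}{j}{\tree}$ and $I$ lie in $\cluster{\tree}$ and both contain $i$; since $\cluster{\tree}$ is a laminar family by \refdef{def.nested}, these two clusters are compatible, and sharing $i$ forces them to be nested. The hypothesis $\card{\cancCL{i}{j}{\tree}} \leq \card{I}$ then pins down the nesting as $\cancCL{i}{j}{\tree} \subseteq I$, giving $j \in I$.

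Second, I would assemble these recovered sets into all of $\cluster{\tree}$. Every non-trivial cluster $I \in \intcluster{\tree}$ contains at least one leaf $i \in \indexset$ and is therefore an ancestor of $\crl{i}$ in $\tree$, so it is recovered by the formula above applied to row $i$. Singletons are recovered from the diagonal $\mat{U}(\tree)_{ii}= 0$, and the root cluster $\indexset$ is just the full leaf set. Hence the entire laminar family $\cluster{\tree}$ can be read off from $\mat{U}(\tree)$, completing the argument.

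The only subtlety I anticipate is making sure the reconstruction does not implicitly assume nondegeneracy: degenerate trees have laminar families in which not every cardinality in $\crl{1,\ldots,\card{\indexset}}$ need appear as $\card{I}$ for some ancestor of $\crl{i}$. This is handled automatically by the formula, since it uses cardinality thresholds drawn from the values $\card{\cancCL{i}{j}{\tree}}$ actually realized by the tree, not a prescribed range; so the proof goes through for all $\tree \in \treetopspace_{\indexset}$, not merely for $\bintreetopspace_{\indexset}$.
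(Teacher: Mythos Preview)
Your proof is correct but takes a genuinely different route from the paper's. The paper argues directly that $\treeA \neq \treeB$ forces $\mat{U}(\treeA) \neq \mat{U}(\treeB)$: it picks a common cluster $I$ whose parents in $\treeA$ and $\treeB$ differ, splits into cases according to whether $\card{\parentCL{I,\treeA}} = \card{\parentCL{I,\treeB}}$ or not, and in each case exhibits explicit indices $i,j$ at which the two matrices differ. Your argument instead constructs a left inverse for $\mat{U}$ by showing that the sublevel sets $\crl{j : \mat{U}(\tree)_{ij} \leq v}$ of each row recover precisely the chain of ancestors of $\crl{i}$, and hence all of $\cluster{\tree}$. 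Your approach is more conceptual and avoids the case split; it also makes transparent \emph{why} the choice $h(I)=\card{I}-1$ suffices (any height function strictly increasing along chains would do). The paper's approach, on the other hand, produces a concrete witness entry $(i,j)$ distinguishing the two matrices, which is closer in spirit to the quantitative bounds developed later in the section. One small point you leave implicit is that for \emph{every} threshold $v$ appearing in row $i$ the resulting sublevel set is indeed a cluster of $\tree$ (not merely that every cluster arises this way); this follows immediately from your key formula once one notes that each such $v$ equals $\card{\cancCL{i}{j}{\tree}}-1$ for some $j$, so the threshold already matches an actual ancestor.
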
 
\begin{proof}
To see the injectivity of $\mat{U}$  (\refdef{def.UltrametricRepn}), we shall show that $\mat{U}\prl{\treeA} \neq \mat{U}\prl{\treeB}$ for any  $\treeA \neq \treeB \in \treetopspace_{\indexset}$. 

Two trees $\treeA, \treeB \in \treetopspace_{\indexset}$ are distinct if and only if they have at least one unshared cluster.
Accordingly, for any $\treeA \neq \treeB \in \treetopspace_{\indexset}$ consider a common cluster $I \in \cluster{\treeA} \cap \cluster{\treeB}$ with distinct parents $\parentCL{I,\treeA} \neq \parentCL{I,\treeB}$. 
Depending on the cardinality of parent clusters: 

\begin{itemize}

\item If $\card{\parentCL{I,\treeA}} = \card{\parentCL{I,\treeB}}$, then observe that there  exists some $j \in \parentCL{I,\treeA}$ such that  $j \not \in \parentCL{I,\treeB}$ because $\parentCL{I,\treeA} \neq \parentCL{I,\treeB}$. 
In fact, notice that  $j \in \complementLCL{ I}{\treeA}$ and $j \not \in \complementLCL{ I}{\treeB}$ (recall  \refeqn{eq:compLCL}). 
Hence, for any $i \in I$ we have  $\cancCL{i}{j}{\treeA} = \parentCL{I,\treeA}$ and $\parentCL{I,\treeB} \subsetneq \cancCL{i}{j}{\treeB}$. 
Thus, it follows from \refdef{def.UltrametricRepn} that  for any $i \in I$
\begin{equation}
\mat{U}\prl{\treeA}_{ij} = \card{\big.\parentCL{I,\treeA}} -1 < \mat{U}\prl{\treeB}_{ij} = \card{\big. \cancCL{i}{j}{\treeB}} -1 ~.
\end{equation}  

\item Otherwise, without loss of generality, let $\card{\parentCL{I,\treeA}} < \card{\parentCL{I,\treeB}}$. 
Then, observe that for any $i \in I$ and $j \in \complementLCL{I}{\treeA}$,
\begin{equation}
\mat{U}\prl{\treeA}_{ij} = \card{\parentCL{I,\treeA}} - 1 < \mat{U}\prl{\treeB}_{ij} = \card{\big. \cancCL{i}{j}{\treeB}} - 1 ~,
\end{equation}
since $\cancCL{i}{j}{\treeB} \supseteq \parentCL{I,\treeB}$.

\end{itemize}

Therefore, for any $\treeA \neq \treeB \in \bintreetopspace_{\indexset}$ one has $\mat{U}\prl{\treeA} \neq \mat{U}\prl{\treeB}$, and the result follows.
\end{proof}

Using the embedding $\mat{U}$  of $\treetopspace_{\indexset}$ into $\R^{\card{\indexset} \times \card{\indexset}}$, we can construct tree metrics by pulling back metrics induced from matrix norms, such as the one below:
\begin{definition} \label{def:ClusterCardinalityDist}
The \emph{cluster-cardinality metric}, $\distCC:\treetopspace_{\indexset} \times \treetopspace_{\indexset} \rightarrow \R_{\geq 0}$, on $\treetopspace_{\indexset}$ is defined to be
\footnote{Here $\norm{.}_1$ denotes the 1-norm of a matrix, i.e. $ \norm{\mat{U}}_1 \ldf \sum_{i=1}^{n} \sum_{j=1}^{n} \absval{\mat{U}_{ij}}$ for  $\mat{U} \in \R^{n \times n}$. Our choice of the 1-norm was guided by the resulting relationships between $\distCC$ and the dissimilarity measures $\distCM$ and $\distNav$ introduced below. Other choices of norm on $\R^{\indexset\times\indexset}$ may prove useful.}
\begin{equation}\label{eq.CCdistance}
\distCC\prl{\treeA,\treeB} \ldf \frac{1}{2}\norm{\big. \mat{U}\prl{\treeA} - \mat{U}\prl{\treeB}}_{1},  \quad \treeA, \treeB \in \treetopspace_{\indexset} ~.
\end{equation} 
\end{definition} 
\begin{proposition}\label{prop:CCdistanceCost}
The cluster-cardinality distance $\distCC$ on $\treetopspace_{\indexset}$ is computable in $\bigO{\card{\indexset}^2}$ time. 
\end{proposition}
\begin{proof} The 1-norm of the difference of a pair of $\card{\indexset} \times \card{\indexset}$ matrices obviously requires $\bigO{\card{\indexset}^2}$ time to compute, giving a lower bound on the computation cost of $\distCC$. It remains to show that the embedding $\mat{U}$ (\refdef{def.UltrametricRepn}) may be obtained at this cost. We proceed by induction based on a post-order traversal of the trees involved, $\tree \in \treetopspace_{\indexset}$. For the base case, consider the two-leaf tree $\tree\in \bintreetopspace_{\brl{2}}$, i.e.  $\card{\indexset} = 2$: then we simply assign $\mat{U}\prl{\tree} = \left[\begin{smallmatrix} 0&1\\ 1&0\end{smallmatrix}\right]$. For the induction step, assume $\card{S}\geq 3$ and denote $\childCL{\indexset,\tree}=\crl{\indexset_k}_{1 \leq k \leq K}$, where $K \geq 2$ is the number of children of the root $\indexset$ in $\tree$. We observe:
\begin{itemize}
	\item For every singleton child $\{i\}$ of $S$ in $\tree$ (if any), then set $\mat{U}\prl{\tree}_{ii}= 0$, which takes up $\bigO{1}$ time.
	\item Note that all clusters of $\tree$ and their sizes can be obtained in $\bigO{\card{\indexset}^2}$ time by a single post-order traversal, as each individual cluster (as well as its cardinality) takes at most linear time to compute from those of its children.
	\item Suppose that for any $1 \leq k \leq K$ and $\card{\indexset_k}\geq2$ the elements of $\mat{U}\prl{\tree}$ associated with the subtree rooted at $\indexset_k$ can be computed in $\bigO{\card{\indexset_k}^2}$ time. Then, the total number of  updates associated with the root $\indexset$ is $\sum_{k = 1}^{K} \sum_{l =1}^K \card{\indexset_k}\card{\indexset_l} $  and  corresponds to setting $\mat{U}\prl{\tree}_{ij} =\mat{U}\prl{\tree}_{ji}= \card{\indexset} - 1$ for all $i \in \indexset_k$, $ j \in \indexset_l$ and $1 \leq k,l \leq K$.
\end{itemize}
In total, the cost of obtaining $\mat{U}\prl{\tree}$ is $\sum_{k =1}^{K}\bigO{\card{\indexset_k}^2}  + \sum_{k = 1}^{K} \sum_{l=1}^{K} \card{\indexset_k}\card{\indexset_l} + \bigO{\card{\indexset}^2}= \bigO{\card{\indexset}^2}$, as required.
\end{proof}

The diameter, $\diam{X,\dist} \ldf \max \crl{\dist\prl{x,y} \big | \, x,y \in X}$, of a finite metric space $\prl{X,\dist}$ is always of interest in algorithmic applications. Some known diameters for hierarchies~\cite{bogdanowicz_giaro_TCBB2012, lin_rajan_moret_TCBB2012, sleator_tarjan_thurston_stoc1986} are:
\begin{equation}\label{eq:diameters}
\diam{\treetopspace_{\indexset}, \distRF}=\card{\indexset} - 2\,,\quad
\diam{\bintreetopspace_{\indexset}, \distMS}=\bigO{\card{\indexset}^2}\,,\quad
\diam{\bintreetopspace_{\indexset}, \distNNI}=\bigO{\card{\indexset}\log \card{\indexset}}
\end{equation}
For the cluster-cardinality distance we have:
\begin{proposition}\label{prop.CC_diameter}
$\diam{\treetopspace_{\indexset},\distCC} = \bigO{\card{\indexset}^3}$ .
\end{proposition}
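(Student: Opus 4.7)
The plan is to bound the diameter by an entry-wise estimate on the difference matrix $\mat{U}(\treeA) - \mat{U}(\treeB)$. First I would note that each entry satisfies $0 \leq \mat{U}(\tree)_{ij} = h\prl{\cancCL{i}{j}{\tree}} = \card{\cancCL{i}{j}{\tree}} - 1 \leq \card{\indexset} - 1$, since any cluster of $\tree$ is contained in $\indexset$. I would also observe that $\mat{U}(\tree)_{ii} = h\prl{\crl{i}} = 0$ for every $i \in \indexset$, so the diagonal entries of any difference $\mat{U}(\treeA) - \mat{U}(\treeB)$ vanish.

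Combining these two facts, each of the $\card{\indexset}(\card{\indexset}-1)$ off-diagonal entries of $\mat{U}(\treeA) - \mat{U}(\treeB)$ is bounded in absolute value by $\card{\indexset} - 1$, so $\norm{\mat{U}(\treeA) - \mat{U}(\treeB)}_1 \leq \card{\indexset}(\card{\indexset}-1)^2$. Substituting into the definition \refeqn{eq.CCdistance} then yields $\distCC(\treeA,\treeB) \leq \frac{1}{2}\card{\indexset}(\card{\indexset}-1)^2 = \bigO{\card{\indexset}^3}$ uniformly over $\treeA,\treeB \in \treetopspace_{\indexset}$, which establishes the claimed upper bound.

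There is essentially no obstacle in this argument, because both the range of a single entry and the total number of entries admit straightforward bounds, and the pullback structure of $\distCC$ reduces everything to the matrix 1-norm. If one wished to verify that the cubic rate is asymptotically tight, one could exhibit a pair of ``opposite'' caterpillar trees on $\indexset = [n]$: one built by peeling off leaves in the order $1, 2, \ldots, n$ and one in the reverse order. A direct computation gives $\mat{U}(\treeA)_{ij} - \mat{U}(\treeB)_{ij} = n + 1 - i - j$ for $i < j$, and the resulting sum $\sum_{i<j}\absval{n+1-i-j}$ is of order $n^3$, confirming that the cubic upper bound is sharp up to constants.
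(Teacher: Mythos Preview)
Your upper bound argument is essentially the same as the paper's: both bound each off-diagonal entry of $\mat{U}(\treeA)-\mat{U}(\treeB)$ and multiply by the number of such entries. The paper sharpens the per-entry bound slightly to $\card{\indexset}-2$ by noting that for $i\neq j$ the smallest common ancestor has at least two elements, so $\mat{U}(\tree)_{ij}\geq 1$; this yields $\frac{1}{2}\card{\indexset}(\card{\indexset}-1)(\card{\indexset}-2)$ rather than your $\frac{1}{2}\card{\indexset}(\card{\indexset}-1)^2$, but of course both are $\bigO{\card{\indexset}^3}$.

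Where you genuinely differ is in the tightness argument. The paper does not build explicit trees; instead it invokes \refprop{prop:NNIClusterCardinality}, which shows that a single NNI move can already produce $\distCC=2\card{A}\card{B}\card{C}\leq\lfloor\tfrac{2}{27}\card{\indexset}^3\rfloor$, with the upper bound attained when $\card{A}=\card{B}=\card{C}$. Since any pair of NNI-adjacent trees is a pair of distinct trees, this immediately furnishes the cubic lower bound on the diameter. Your caterpillar construction is more hands-on and self-contained: it does not rely on the NNI analysis at all, and the explicit formula $\mat{U}(\treeA)_{ij}-\mat{U}(\treeB)_{ij}=n+1-i-j$ makes the cubic growth transparent. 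The paper's route is shorter given the surrounding machinery, while yours would stand on its own even without the NNI framework.
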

\begin{proof} 
From \refdef{def.UltrametricRepn}, the minimum and maximum ultrametric distances between two distinct elements of $\indexset$ are, respectively,  1 and $\card{\indexset} - 1$, implying the bound
\begin{equation}\label{eq.PtwiseDifference}
	\max_{i,j \in \indexset} \prl{\mat{U}\prl{\treeA}_{ij}-\mat{U}\prl{\treeB}_{ij}}\leq \card{\indexset} - 2 ~.
\end{equation}
Moreover, using the tight upper bound on the change of the cluster-cardinality distance after a single NNI move from \refprop{prop:NNIClusterCardinality}, the diameter of $\treetopspace_{\indexset}$ with respect to $\distCC$ satisfies
\begin{equation}
\left \lfloor \frac{2}{27}\card{\indexset}^3 \right \rfloor \leq \diam{\treetopspace_{\indexset},\distCC} \leq \frac{1}{2}\card{\indexset}\prl{\card{\indexset}-1}\prl{\card{\indexset}-2},
\end{equation} 
which completes the proof.
\end{proof}

A common question regarding any distance being proposed for the space of trees is how it behaves with respect to certain tree rearrangements. 
For instance, any pair of NNI-adjacent trees, $\treeA, \treeB \in \bintreetopspace_{\indexset}$, are known to satisfy \cite{bogdanowicz_giaro_TCBB2012} \footnote{$\lfloor.\rfloor$ denotes the floor operator returning the largest integer not greater than its operand.} 

\noindent
\begin{align}
&\distNNI\prl{\treeA,\treeB} = 1 \Longleftrightarrow \distRF\prl{\treeA,\treeB} = 1 ~, \label{eq:dRF vs dNNI}\\
&\distNNI\prl{\treeA,\treeB} = 1 \Longrightarrow  2 \leq \distMS\prl{\treeA,\treeB} \leq \left \lfloor \frac{\card{\indexset}}{2} \right \rfloor ~.\label{eq:dMS vs dNNI}
\end{align} 
Similarly for $\distCC$ we have:
\begin{proposition} \label{prop:NNIClusterCardinality}
Let $\prl{\treeA,\treeB}$ be an edge of the NNI-graph $\NNIgraph_{\indexset} = \prl{\bintreetopspace_{\indexset}, \NNIedgeset}$ and $\prl{A,B,C}$ be the associated NNI triplet (\reflem{lem.NNITriple}). 
Then 
\begin{equation}\label{eq.CC_NNI_range}
2 \leq \distCC\prl{\treeA,\treeB} =  2\card{A}\card{B}\card{C} \leq \left \lfloor \frac{2}{27}\card{\indexset}^3 \right \rfloor ~,
\end{equation}
and both bounds are tight.
\end{proposition}
\begin{proof}
Let $P = A \cup B \cup C$ and recall from \reflem{lem.NNITriple} that $A \cup B \in \cluster{\treeA}$ and $B \cup C \in \cluster{\treeB}$. 
Note that $P \in \cluster{\treeA} \cap \cluster{\treeB}$ is a common (grand)parent cluster, and $A$, $B$ and $C$ are pairwise disjoint.  

Since the NNI moves between $\treeA$ and $\treeB$ only change the relative relations of clusters $A,B$ and $C$, the distance between $\treeA$ and $\treeB$ can be rewritten as 

\noindent
\begin{align}
\distCC\prl{\treeA,\treeB} &= \frac{1}{2}\norm{\big.\mat{U}\prl{\treeA} - \mat{U}\prl{\treeB}}_1 ~, \\
&= \sum_{\substack{i\in A \\ j \in B}} \absval{\mat{U}\prl{\treeA}_{ij} - \mat{U}\prl{\treeB}_{ij}} +  \sum_{\substack{i\in A \\ j \in C}} \absval{\mat{U}\prl{\treeA}_{ij} - \mat{U}\prl{\treeB}_{ij}}  +  \sum_{\substack{i\in B \\ j \in C}} \absval{\mat{U}\prl{\treeA}_{ij} - \mat{U}\prl{\treeB}_{ij}} ~, \\
& =  \sum_{\substack{i\in A \\ j \in B}} \underbrace{\absval{ h\prl{A \cup B} - h\prl{P}}}_{=\card{C}}  + \sum_{\substack{i\in A \\ j \in C}} \underbrace{\absval{h\prl{P} - h\prl{P}}}_{=0} +  \sum_{\substack{i\in B \\ j \in C}} \underbrace{\absval{h\prl{P} - h \prl{B \cup C} }}_{=\card{A}} ~, \\
& =  2\card{A}\card{B}\card{C}.
\end{align}
Clearly, the lower bound in \refeqn{eq.CC_NNI_range} is realized when $\card{A}=\card{B}=\card{C}=1$. 
Since the maximum product of three numbers with a prescribed sum occurs when all the numbers are equal~ ---~ in our case, $\card{A}+\card{B} + \card{C} \leq \card{\indexset}$~ ---~ we must have $\card{A}\card{B} \card{C} \leq \left \lfloor\frac{\card{\indexset}^3}{27} \right \rfloor$, as $\card{\big. .}$ is integer-valued. 
The result follows. 
\end{proof}

Inequalities of the above form allow one to take advantage of the combinatorial nature of $\distNNI$ through repeated application of the triangle inequality:
\begin{corollary} \label{cor:NNI_RF_bound}
Over $\bintreetopspace_{\indexset}$ one has $\distRF \leq \distNNI$.
\end{corollary}
Indeed, the length of a path in $\NNIgraph_{\indexset}$ produces a bound on the RF distance between its endpoints by repeatedly applying the triangle inequality to \eqref{eq:dRF vs dNNI}. A similar argument yields:
\begin{corollary}\label{cor:NotMetric}
Let $\dist$ be a dissimilarity on $\bintreetopspace_{\indexset}$ with the property that $\dist\prl{\treeA,\treeB} \leq 1$ for any pair of NNI-adjacent hierarchies $\treeA,\treeB \in \bintreetopspace_{\indexset}$.
If $\dist\prl{\treeA,\treeB}> \distNNI\prl{\treeA,\treeB}$ for some $\treeA,\treeB \in \bintreetopspace_{\indexset}$, then $d$ is not a metric. 
\end{corollary}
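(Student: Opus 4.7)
The plan is to prove the contrapositive via a direct application of \reflem{lem.GraphMetricLowerBound}. Assume, toward contradiction, that $\dist$ is a metric. I will set up the hypotheses of \reflem{lem.GraphMetricLowerBound} by taking the underlying graph to be the NNI-graph $\NNIgraph_{\indexset} = \prl{\bintreetopspace_{\indexset}, \NNIedgeset}$ equipped with the constant unit edge weight function $A \equiv 1$, so that with this choice the shortest-path distance $\dist_G$ coincides with $\dist_{NNI}$ by definition.

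The key step is to verify the edge-wise inequality required by \reflem{lem.GraphMetricLowerBound}: for every $(\treeA,\treeB) \in \NNIedgeset$, we have $\dist_{NNI}\prl{\treeA,\treeB} = 1$, so the standing hypothesis on $\dist$ immediately gives $\dist\prl{\treeA,\treeB} \leq 1 = A\prl{\treeA,\treeB}$. Since by our contradiction assumption $\prl{\bintreetopspace_{\indexset},\dist}$ is a metric space, \reflem{lem.GraphMetricLowerBound} applies and yields $\dist\prl{\treeA,\treeB} \leq \dist_G\prl{\treeA,\treeB} = \dist_{NNI}\prl{\treeA,\treeB}$ for all $\treeA,\treeB \in \bintreetopspace_{\indexset}$. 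This directly contradicts the hypothesis that there exists a pair with $\dist\prl{\treeA,\treeB} > \dist_{NNI}\prl{\treeA,\treeB}$, so $\dist$ cannot be a metric.

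There is essentially no combinatorial obstacle here; the content of the statement is entirely an unpacking of \reflem{lem.GraphMetricLowerBound} in the case of the NNI-graph with unit weights, together with the observation that $\dist_{NNI}$ is the corresponding shortest-path distance. The only care required is to ensure that the NNI-graph is connected (so that $\dist_{NNI}$ is indeed finite and \reflem{lem.GraphMetricLowerBound} is applicable across all of $\bintreetopspace_{\indexset}$), which is a standard fact about NNI moves and is implicit in \reflem{lem.GraphMetricLowerBound}'s requirement that $G$ be connected.
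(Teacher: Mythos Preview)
Your proof is correct and matches the paper's intended approach exactly: the corollary is listed in the paper without an explicit proof, immediately after \reflem{lem.GraphMetricLowerBound} under the heading ``Here are some useful applications,'' so the implicit argument is precisely the contrapositive application of that lemma to the unit-weighted NNI-graph that you spell out.
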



\subsection{The Crossing Dissimilarity}
\label{sec.CMdistance}

\begin{definition}\label{def.CompatibilityAndCrossingMat}
Let $\treeA,\treeB \in \treetopspace_{\indexset}$.
We define their \emph{compatibility matrix} $\CMat\prl{\treeA,\treeB}$ and their \emph{crossing matrix} $\XMat\prl{\treeA,\treeB}$ to be\footnote{$\CMat\prl{\treeA,\treeB}$ and $\XMat\prl{\treeA, \treeB}$ can be defined only in terms of nontrivial clusters of $\treeA$ and $\treeB$ since  any trivial cluster of $\treeA$ and $\treeB$ is compatible with any cluster $K \subseteq \indexset$. As a result, we are required to separately consider the special case in which one of the trees has only trivial clusters whenever $\CMat$ or $\XMat$ are used to reason about degenerate trees.}
\begin{equation}\label{eq.CMatXMat}
	\CMat\prl{\treeA,\treeB}_{I,J} \ldf \indicator\prl{I \compatible J} \quad 
	\text{ and } \quad
	\mat{X}\prl{\treeA,\treeB}_{I,J} \ldf 1 - \CMat\prl{\treeA,\treeB}_{I,J},  
\end{equation}
where $I \in \cluster{\treeA}, J \in \cluster{\treeB}$ and $\indicator\prl{.}$ denotes the indicator function returning unity if its argument holds true and zero otherwise. The \emph{crossing dissimilarity} $\distCM$ is defined by
$\distCM\prl{\treeA,\treeB} \ldf \norm{\big.\XMat\prl{\treeA,\treeB}}_1\,$, counting\footnote{We find that choosing to use the 1-norm of the crossing matrix easily reveals combinatorial relations between $\dist_{CM}$ and $\distCC$ \refeqn{eq.CCdistance}; of course, one could use other matrix norms to construct alternative dissimilarities.} the pairs of incompatible clusters in $\cluster{\treeA}\cup\cluster{\treeB}$.
\end{definition}

We list some useful properties of $\distCM$:
\begin{remark} \label{rem.distCMNotMetric}
The crossing  dissimilarity $\distCM$ on $\bintreetopspace_{\indexset}$ is positive definite and  symmetric, but it is not a metric (apply \refcor{cor:NotMetric} to the observations of  \reffig{fig:CMatNormNavLengthNotMetric}).
\end{remark} 

\medskip

\begin{figure}[htb]
\centering
\includegraphics[width=0.60 \textwidth]{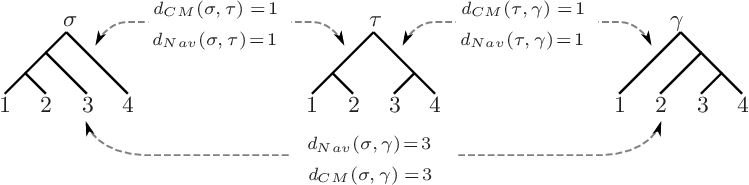}
\caption{ $\distCM$ and $\distNav$ are not metrics: an example of the triangle inequality failing for both dissimilarities.}
\label{fig:CMatNormNavLengthNotMetric}
\end{figure}

\smallskip
 
\begin{proposition}\label{prop:CMdistanceCost}
The crossing dissimilarity $\distCM$ over  $\treetopspace_{\indexset}$ can be computed in $\bigO{\card{\indexset}^2}$ time. 
\end{proposition} 
\begin{proof}
The crossing matrix $\XMat\prl{\treeA,\treeB}$ \refeqn{eq.CMatXMat} of a pair of hierarchies  $\treeA, \treeB \in \treetopspace_{\indexset}$ has at most $2\card{\indexset} -1$ rows and columns.
Hence, the 1-norm of $\XMat\prl{\treeA,\treeB}$ requires $\bigO{\card{\indexset}^2}$ time to compute, bounding the cost of $\dist_{CM}$ from below.
To obtain the upper bound, we show that $\XMat\prl{\treeA, \treeB}$ can be obtained in $\bigO{\card{\indexset}^2}$ time by post-order traversal.

Observe that for any cluster $J \in \cluster{\treeB}$ (and symmetrically, for any cluster of $\cluster{\treeA}$) one can check whether $J$ is disjoint with or a superset of each cluster $I$ of $\treeA$ by a post-order traversal of $\treeA$ in $\bigO{\card{\indexset}}$ time using the following recursion: 
\begin{itemize}
	\item If either $I$ or $J$ is a singleton then the cluster inclusions $I \subseteq J$, $J \subseteq I$ and their disjointness can be determined in constant time using a hash map.
	\item Otherwise ($\card{I} \geq 2$ and $\card{J} \geq 2$), we have
	\begin{align}
		I \subseteq J & \Longleftrightarrow \forall D \in \childCL{I, \treeA} \quad D \subseteq J , \\
		I \cap J = \emptyset &\Longleftrightarrow \forall D \in \childCL{I,\treeA} \quad D \cap J = \varnothing. 
	\end{align}
\end{itemize}

Thus, it follows from \refdef{def.compatibility} that a complete list of compatibilities between $\treeA$ and $\treeB$ can be produced in $\bigO{\card{\indexset}^2}$ time, and so  $\XMat\prl{\treeA,\treeB}$ can be obtained at the same cost, $\bigO{\card{\indexset}^2}$.
\end{proof}
 
\begin{proposition} \label{prop.CMdiameter}
$\diam{\treetopspace_{\indexset}, \distCM} = \prl{\card{\indexset} - 2}^2$ . 
\end{proposition}
\begin{proof}
Two clusters of a pair of trees can only be incompatible if they are both nontrivial.
Recall from \refrem{rem:MaximumCardinality} that  the number of nontrivial clusters of a tree  in $\treetopspace_{\indexset}$ is at most $\card{\indexset} -2$. 
Hence, by \refdef{def.CompatibilityAndCrossingMat}, an upper bound on $\diam{\treetopspace_{\indexset}, \distCM}$ is $\,\prl{\card{\indexset} -2}^2$. 
To observe that this upper bound is realized, see  \reffig{fig.distCMdistNavDiameter}.
\end{proof} 

\smallskip

\begin{figure}[htb] 
\centering
\includegraphics[width=0.5\textwidth]{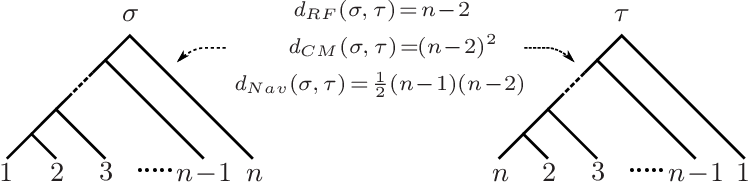} 
\caption{A pair of nondegenerate hierarchies realizing  
 $\diam{\treetopspace_{\brl{n}}, \distCM} = \prl{n - 2}^2$ and 
 $\diam{\bintreetopspace_{\brl{n}}, \distNav} = \frac{1}{2}\prl{n - 1}\prl{n -2}$.}
\label{fig.distCMdistNavDiameter}
\end{figure}
 
\begin{proposition} \label{prop.NNI_CM_neighbor}
Two nondegenerate trees $\treeA,\treeB \in \bintreetopspace_{\indexset}$ are NNI-adjacent if and only if $\distCM\prl{\treeA,\treeB} = 1$. 
\end{proposition}
\begin{proof}
The result is  evident from \refrem{rem:MaximumCardinality} and \refdef{def:NNIMove}.
\end{proof} 

\noindent Despite the result of the last proposition, $\distCM$ does not provide a linear lower bound on $\distNNI$ since $\diam{\bintreetopspace_{\indexset}, \dist_{NNI}} = \bigO{\card{\indexset} \log \card{\indexset}} < \diam{\bintreetopspace_{\indexset}, \dist_{CM}} = \bigO{\card{\indexset}^2}$ (\refprop{prop.CMdiameter}).
This inequality provides us with an additional, more conceptual, argument that $\distCM$ is not a metric, by applying \refcor{cor:NotMetric}.

\begin{proposition} \label{prop:CM_RF_bound}
Over $\treetopspace_{\indexset}$ one has $\distRF \leq \dist_{CM} \leq \distRF^2\,$. 
%
These bounds are tight.
\end{proposition}
\begin{proof}
The lower bound directly follows from \refrem{rem:MaximumCardinality}. 
Because a pair of distinct binary hierarchies always have uncommon clusters whose count is equal to $\distRF$, and an unshared cluster of one tree crosses at least one unshared cluster of the other tree.
This bound is  tight since for any $\treeA, \treeB \in \bintreetopspace_{\indexset}$
\begin{equation}
\distRF\prl{\treeA, \treeB} = 1 
\Leftrightarrow 
\distNNI\prl{\treeA, \treeB} = 1 
\Leftrightarrow 
\distCM\prl{\treeA, \treeB} = 1.
\end{equation}

For any $\treeA, \treeB \in \bintreetopspace_{\indexset}$, the columns and rows of $\XMat\prl{\treeA, \treeB}$ \refeqn{eq.CMatXMat} associated with common clusters of $\treeA,\treeB$ are necessarily null.
Hence, $\XMat\prl{\treeA, \treeB}_{I,J}\neq 0$ implies $I\notin\cluster{\treeB}$ and $J\notin\cluster{\treeA}$. By the definition of $\distRF$, there are no more than $\distRF\prl{\treeA, \treeB}^2$ such pairs~ ---~ hence the claimed upper bound. To observe that this bound is also tight, see  \reffig{fig.distCMdistNavDiameter}.
\end{proof}

\begin{proposition} \label{prop:CM_CC_bound}
Over $\treetopspace_{\indexset}$ one has $\dist_{CM} \leq \distCC\,$.
\end{proposition} 
\begin{proof}
Given any $\treeA,\treeB \in \treetopspace_{\indexset}$ we claim that there is a function $q : \cluster{\treeA} \times \cluster{\treeB} \rightarrow S \times S$ with the following properties:
\begin{enumerate}[label=(\alph*)]
\item \label{it.q1}  for any  $I \in \cluster{\treeA}$ and $J \in \cluster{\treeB}$, $ I \compatible J$ if and only if  $\,\prl{i,j} = q\prl{I,J}$ with $i = j$, 
\item \label{it.q2} for any $i \neq j \in \indexset$,  $\card{q^{-1}\prl{i,j}} \leq \absval{\mat{U}\prl{\treeA}_{ij} - \mat{U}\prl{\treeB}_{ij}}$.  
\end{enumerate}

\noindent Observe that, if such a function does exist, then \ref{it.q1} implies:
\begin{equation} \label{eq.crossingsqinverse}
\bigcup_{i \neq j \in S} q^{-1}\prl{i,j} =\crl{\,\prl{I,J}  \in \cluster{\treeA} \times \cluster{\treeB} \Big | \, I \not \compatible J}.
\end{equation}
It is then evident from  \refeqn{eq.crossingsqinverse} and \ref{it.q2} that
\begin{equation}
\distCM\prl{\treeA, \treeB} \leq \sum_{i \neq j \in \indexset} \card{q^{-1}\prl{i,j}} \leq \distCC\prl{\treeA,\treeB},
\end{equation} 
proving our proposition.

We proceed to construct the function $q$. If $I \not \compatible J$, then  there exist $i \in I \cap J$  and $j \in I \setminus J$ with the property that $\cancCL{i}{j}{\treeA} = I$.
Accordingly, define 

\noindent
\begin{align}
Q\prl{I,J} &\ldf \crl{\, \prl{i,j} \in \indexset \times \indexset  \Big | \, i \in I \cap J,  j \in I \setminus J, \, \cancCL{i}{j}{\treeA} = I },  \label{eq:Qset}
\\
R\prl{I,J} & \ldf \crl{\, \prl{i,j} \in \indexset \times \indexset \Big | \, i \in I \cap J,  j \in J \setminus I, \, \cancCL{i}{j}{\treeB} = J }. 
\end{align}
Note that if $\prl{i,j}  \in Q\prl{I, J} \cup R\prl{I,J}$, then $i \neq j$.

Have $\indexset$ totally ordered (say, by enumerating its elements) and have $\indexset \times \indexset$ ordered lexicographically according to the order of $\indexset$.
Then,  define $q : \cluster{\treeA} \times  \cluster{\treeB}  \rightarrow \indexset \times \indexset$ to be 
\begin{equation}
q\prl{I,J} \ldf 
\left \{ \begin{array}{cl}
\prl{ \big. \min\prl{I \cup J}, \min\prl{I \cup J}} & \text{, if } I \compatible J,\\
\min Q\prl{I,J} & \text{, if } I \not \compatible J, \card{I} \leq \card{J}, \\
\min R\prl{I,J} & \text{, if } I \not \compatible J, \card{I} > \card{J}. 
\end{array} \right . 
\end{equation}
Recall that  $Q\prl{I,J}$ and $R\prl{I,J}$ both contain pairs of distinct elements of $\indexset$. 
Hence,  $q$ satisfies the property \ref{it.q1} above.
 
By construction, for any $i\neq j$ we have:
\begin{equation} \label{eq.qinverse}
q^{-1}\prl{i,j} \subseteq A\prl{i,j} \cup B\prl{i,j},
\end{equation}
where

\noindent
\begin{align}
A \prl{i,j} &\ldf  \left \{\, \prl{I,J} \in \cluster{\treeA} \times \cluster{\treeB}  \Big| \, I \not \compatible J,  \card{I} \leq \card{J}, \,   \prl{i,j} \in Q\prl{I,J} \right \},
\\
B\prl{i,j} &\ldf \left \{ \, \prl{I,J} \in \cluster{\treeA} \times \cluster{\treeB}  \Big  | \, I \not \compatible J,  \card{I} \geq \card{J}, \, \prl{i,j} \in R\prl{I, J} \right \}.
\end{align}
Remark from \refeqn{eq:Qset} that if $\prl{I,J} \in A\prl{i,j}$ then $\cancCL{i}{j}{\treeA} = I$ and $\cancCL{i}{j}{\treeB} \supsetneq J$. 
Hence,  if $\card{\,\cancCL{i}{j}{\treeA}} \geq \card{\,\cancCL{i}{j}{\treeB}}$, then $A\prl{i,j} = \varnothing$. 
Similarly, $\cancCL{i}{j}{\treeA} \supsetneq I$ and $\cancCL{i}{j}{\treeB} = J$ whenever $\prl{I,J} \in B\prl{i,j}$; and $B\prl{i,j} = \varnothing$ if $\card{\,\cancCL{i}{j}{\treeA}} \leq \card{\,\cancCL{i}{j}{\treeB}}$.
Thus, one can observe that   for any $i,j \in \indexset$,
\begin{equation}
A\prl{i,j} \neq \varnothing \Longrightarrow B\prl{i,j} = \varnothing. \label{eq.incompatiblepairs}
\end{equation}

Recall that for any $i,j \in \indexset$ and $\prl{I,J} \in A\prl{i,j}$ we have:
\begin{equation}
I = \cancCL{i}{j}{\treeA}, \, J \subsetneq \cancCL{i}{j}{\treeB}, \,  \card{I}  \leq \card{J} \text{ and } J \in \ancestorCL{\crl{i}, \treeB}.
\end{equation}
Hence,  one can conclude that
\begin{equation}
\card{\big.A\prl{i,j}} \leq \absval{\Big.\card{\,\cancCL{i}{j}{\treeB}} - \card{\,\cancCL{i}{j}{\treeA}}}
= \absval{\mat{U}\prl{\treeB}_{ij} - \mat{U}\prl{\treeA}_{ij}}. 
\end{equation}    
Similarly, for any $i,j \in \indexset$
\begin{equation}
\card{\big.B\prl{i,j}} \leq \absval{\Big. \card{\,\cancCL{i}{j}{\treeA}} - \card{\,\cancCL{i}{j}{\treeB}}}
= \absval{\mat{U}\prl{\treeA}_{ij} - \mat{U}\prl{\treeB}_{ij}}.  
\end{equation}

Thus, overall, using \refeqn{eq.qinverse} and \refeqn{eq.incompatiblepairs}, one can obtain the second property of $q$ as follows: for any $i \neq j \in S$
\begin{equation}
\card{q_{\treeA, \treeB}^{-1}\prl{i,j}} \leq \card{\big.A\prl{i,j}} + \card{\big.B\prl{i,j}} \leq \absval{\mat{U}\prl{\treeB}_{ij} - \mat{U}\prl{\treeA}_{ij}},
\end{equation}
which completes the proof.
\end{proof}

\section{The Navigation Dissimilarity}
\label{sec.dnav}

\refprob{navigation problem} may be loosely restated in graph-theoretic terms as follows:
\begin{problem} For each tree $\treeB\in\BT{\indexset}$, find a subgraph $\NNIdir{\indexset,\treeB}$ of the NNI graph $\NNIgraph_{\indexset}$ containing no directed cycles and such that every $\treeA\in\BT{\indexset}$ satisfies:
\begin{itemize}
	\item[($\dagger$)] If $\treeA\neq\treeB$ then there exists an edge of $\NNIdir{\indexset, \treeB}$ exiting $\treeA$; moreover, such an edge may be produced in low time complexity.
\end{itemize}
\end{problem}
Clearly, the reactive navigation algorithm $\mathcal{A}_\treeB$ of \refprob{navigation problem} is, in this case, to compute an edge of $\NNIdir{\indexset,\treeB}$ exiting the input tree $\treeA$ and then follow that edge. The challenge for us is to produce a graph (\refdef{def:navigation to a point})  where (i) the complexity of $\mathcal{A}_\treeB$ is low (\refcor{cor:NNINavMoveCost}), and (ii) the length of any directed path is bounded by a reasonable function of $\distNNI\prl{\treeA,\treeB}$, or, at least of $n=\card{\indexset}$ (\refdef{def:navigation distance}, \refthm{thm:navigation graphs work} and \refcor{cor:NNINavComplexity}).
Observe the similarity between our requirements of $\NNIdir{\indexset,\treeB}$ and a skeletal variant of the stricter notion of a {\it combing} from the early days of geometric group theory (see, e.g. \cite{word_processing_in_groups}): a `coherent' system of paths $\{p_x\}_x\in X$ in a topological space $X$, one for each point of the space, with $p_x(0)=x_0$ for all $x\in X$ and $p_y(t)=p_x(t)$ for all $t\leq s$ whenever $y=p_x(s)$. Specializing to the differentiable setting, one might hope to be able to (efficiently) compute a tangent vector $t_x$ to $p_x$ at $x$ in some open dense (and necessarily contractible) sub-manifold of $X$ so that the $p_x$ become integral curves of $\dot x=t_x$; following these curves in reverse comprises reactive navigation towards $x_0$, as seen through the eyes of a roboticist \cite{burridge_rizzi_kod_IJRR1999}.

\bigskip
We start out with a study of the coarse structure of the directed NNI graph $\NNIgraph_{\indexset}$. We consider special subspaces of the vertex space $\BT{\indexset}$:
\begin{definition}\label{def:Subspaces}
Let $K_1,\ldots,K_m$, $m\geq 1$, be a compatible family of subsets of $\indexset$. Denote:
\begin{equation}\label{eq:subspace}
	\BT{\indexset}(K_1,\ldots,K_m)\ldf\crl{
		\treeA\in\BT{\indexset}\,\big|\,
		\treeA\compatible\{K_1,\ldots,K_m\}
	}
\end{equation}
\end{definition}
Recalling that $\cluster{\treeA}$ is a maximal nested family in $\PowerSet{\indexset}$ if and only if $\treeA\in\BT{\indexset}$, one has, in fact:
\begin{equation}\label{eq:subspace2}
	\BT{\indexset}(K_1,\ldots,K_m)=\crl{
		\treeA\in\BT{\indexset}\,\big|\,
		K_1,\ldots,K_m\in\cluster{\treeA}
	}
\end{equation}
Intuitively, it is clear that the problem of navigating $\NNIdir{\indexset}$ towards a specified tree $\treeB$ may be parsed into a sequence of problems, each being that of navigating in $\BT{\indexset}(K)$ towards $\BT{\indexset}\prl{K} \cap \BT{\indexset}\prl{\childCL{K,\treeB}}$, where $K$ ranges over $\cluster{\treeB}$, starting with $K=\indexset$ and continuing inductively, provided each step preserves the achievements of its predecessors.

\subsection{Resolving incompatibilities with a prescribed split}
\label{sec:projectors} 

Throughout this section, let $\mathbb{K}=\{K_1,K_2\}$ be a fixed pair of disjoint non-empty subsets of $\indexset$, and set $K=K_1\cup K_2$. We will refer to such pairs as {\it partial splits}. Let us make a simple observation:
\begin{lemma}\label{lem:SplitCompatibleEquivalence}
The following equivalence holds for all $I\subsetneq K$:
\begin{equation} \label{eq:SplitCompatibleEquivalence}
	I\compatible\mathbb{K}\Longleftrightarrow
		\prl{I\subseteq K_1}\vee\prl{I\subseteq K_2}
\end{equation}
\end{lemma}
\begin{proof}
Suppose $I\compatible\mathbb{K}$ but neither $I\subseteq K_1$ nor $I\subseteq K_2$ holds. By \refdef{def.compatibility} we must then have $I\supseteq K_1$ and $I\supseteq K_2$, implying $I \supseteq K$~ ---~ contradiction to $I\subsetneq K$. The converse is trivial.
\end{proof}

\medskip
Let $\treeA\in\BT{\indexset}(K)$ be a tree which splits $K$ into a pair of children not coinciding with $\mathbb{K}$. According to the preceding lemma, this is equivalent to $\childCL{K,\treeA}\not\compatible\mathbb{K}$. Observe now that any cluster $I\in\cluster{\treeA}$ which is not a $\treeA$-descendant of $K$ is automatically compatible with $\mathbb{K}$. Thus, incompatibilities of $\treeA$ with $\mathbb{K}$ could only occur among $\treeA$-descendants of $K$. This motivates the following definition:
\begin{definition}[Recombinants] \label{def:Recombinants}
For $\treeA\in \BT{\indexset}(K)$ we distinguish two classes of $\treeA$-descendants of the cluster $K$:
\begin{eqnarray}
\CLXSplitSet{\treeA; \mathbb{K}} &\ldf&
	\left\{ \CLXSplit \in \descendantCL{K,\treeA}\,\big|\,
		\CLXSplit \not \compatible \mathbb{K}
	\right\},
\label{eq:Incompatibles}\\
\DeepXSplit{\treeA; \mathbb{K}} &\ldf&  
	\left\{ I \in \CLXSplitSet{\treeA; \mathbb{K}}\,\big|\,
		\childCL{I,\treeA} \compatible \mathbb{K}\,,\;
		\childCL{\complementLCL{I}{\treeA}, \treeA}
			\compatible \mathbb{K}\, 
	\right\} 
\label{eq:Recombinants}
\end{eqnarray}
For lack of a better term, we will refer to the elements of $\DeepXSplit{\treeA;\mathbb{K}}$ as \emph{recombinants} of $\mathbb{K}$ in $\treeA$. See \reffig{fig:CrossingClusters}.
\end{definition}

The set of recombinants suffices to characterize the compatibility of a tree with a given split:
\begin{lemma}\label{lem:recombinants_characterize_splits} Observe that $\treeA\in\BT{\indexset}(K)$ has recombinants of $\mathbb{K}$ if and only if $\treeA\notin\BT{\indexset}(\mathbb{K})$.
\end{lemma}
\begin{proof} Indeed, if $\treeA\in\BT{\indexset}(\mathbb{K})$, then all clusters of $\treeA$ are compatible with $\mathbb{K}$, causing $\CLXSplitSet{\treeA; \mathbb{K}}$ ~ ---~ and hence also $\DeepXSplit{\treeA; \mathbb{K}}$~ ---~ to be empty. Conversely, suppose there is a cluster of $\treeA$ incompatible with $\mathbb{K}$. Then the $\treeA$-children of any deepest such cluster  and  its local complement's children are compatible with $\mathbb{K}$ in $\treeA$, and their children are compatible with $\mathbb{K}$ as well (even if vacuously).
\end{proof}

\begin{figure}[htb]
\centering
\begin{tabular}{c@{\hspace{15mm}}c}
\raisebox{10mm}{\includegraphics[width=0.3\textwidth]{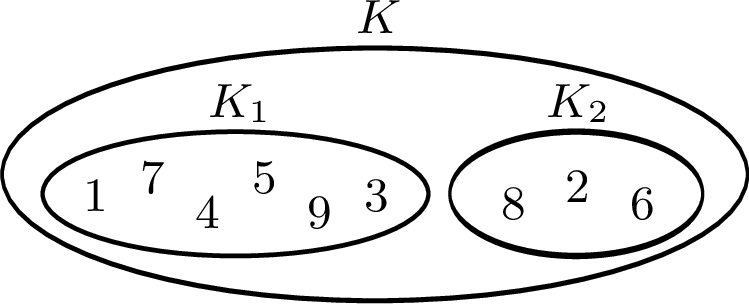}}
&
\includegraphics[width=0.375\textwidth]{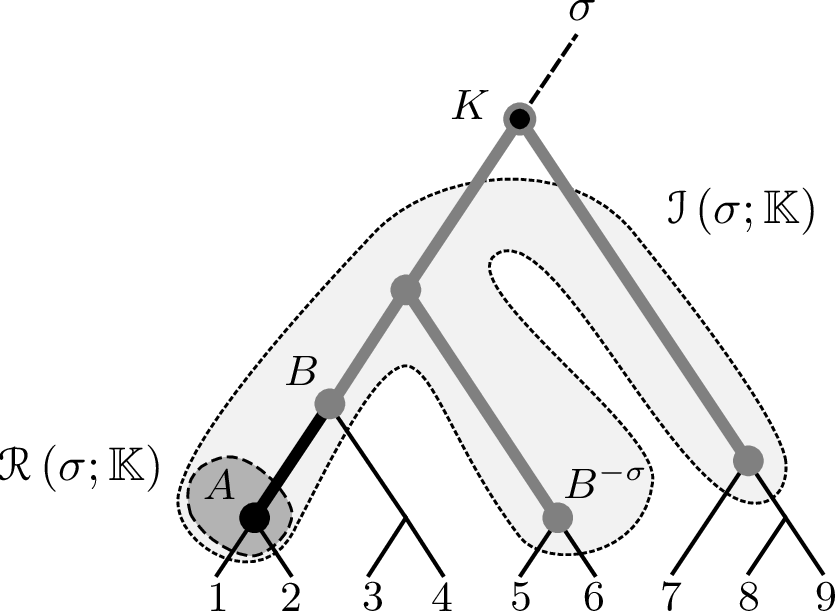}
\end{tabular}
\vspace{-2mm}
\caption{An illustration of $\CLXSplitSet{\treeA;\mathbb{K}}$ \refeqn{eq:Incompatibles} and $\DeepXSplit{\treeA;\mathbb{K}}$ \refeqn{eq:Recombinants} of $\treeA\in \BT{\brl{n}}(K)$, where $n\geq 9$ and $K = \brl{9}$. The vertices and edges associated with clusters of $\treeA$ incompatible with the split $\mathbb{K}$ are thickened.
The only recombinant of $\mathbb{K}$ in $\treeA$ is $A = \crl{1,2}$ and it has Type 1. 
$B$ and $\complementLCL{B}{\treeA}$ are examples of Type 2 clusters of $\treeA$ incompatible with $\mathbb{K}$ which are not recombinants. 
}  \label{fig:CrossingClusters}
\end{figure}

\begin{definition}[Incompatibility Types]\label{def:SingleJointIncompatibility} Given $\treeA\in\BT{\indexset}(K)$, a cluster $I\in\CLXSplitSet{\treeA;\mathbb{K}}$ is said to be of \emph{type 1} with respect to $\mathbb{K}$ if $\complementLCL{I}{\treeA} \compatible\mathbb{K}$. 
If $I\in\CLXSplitSet{\treeA;\mathbb{K}}$ is not of type 1, then it is said to be of \emph{type 2} (see \reffig{fig:CrossingClusters}). 
\end{definition}

Another, perhaps less intuitive, quantifier of incompatibility arises as follows:
\begin{definition}[Essential Crossing Index]\label{def:crossing index} Let $\mathbb{K}=\{K_1,K_2\}$ and $\mathbb{L}=\{L_1,L_2\}$ be partial splits. Their \emph{essential crossing index} is defined as:
\begin{equation}
	\xindex{\mathbb{L}}{\mathbb{K}}\ldf\left\{\begin{array}{rl}
		0	&\text{if }\mathbb{L}\res{K_1\cup K_2}\compatible\mathbb{K}\res{L_1\cup L_2}\\
		1	&\text{if }L_j\res{K_1\cup K_2}\compatible\mathbb{K}\res{L_1\cup L_2}\text{ for only one }j\in\{1,2\}\\
		3	&\text{otherwise}
	\end{array}\right.
\end{equation}
For a tree $\treeA\in\BT{\indexset}$ we define:
\begin{equation}
	\crossnorm{\mathbb{K}}{\treeA}\ldf
	\sum_{I\in\cluster{\treeA}}\xindex{\childCL{I,\treeA}}{\mathbb{K}}
\end{equation}
\end{definition}
The following elementary observations will be useful:
\begin{lemma}\label{lem:xindex is symmetric} Let $\mathbb{K}=\{K_1,K_2\}$ and $\mathbb{L}=\{L_1,L_2\}$ be partial splits. Then $\xindex{\mathbb{K}}{\mathbb{L}}=\xindex{\mathbb{L}}{\mathbb{K}}$.
\end{lemma}
\begin{proof} 
Write $K=K_1\cup K_2$ and $L=L_1\cup L_2$. Without loss of generality we may assume $K=L=\indexset$, since:
\begin{equation}
	\bigcup_{i=1}^2(K_i\cap L)=\bigcup_{j=1}^2(L_j\cap K)=K\cap L\,.
\end{equation}
We study the possible cases:
\begin{itemize}
	\item{$\xindex{\mathbb{L}}{\mathbb{K}}=0$: } By definition, this means none of the $K_i$ crosses any of the $L_j$; equivalently, no $L_j$ crosses any of the $K_i$ and we have $\xindex{\mathbb{K}}{\mathbb{L}}=0$.
	\item{$\xindex{\mathbb{L}}{\mathbb{K}}=1$: } WLOG, only $L_1$ crosses $\mathbb{K}$, hence $L_2$ is contained in one of the $K_i$, say $K_2$. Then $L_1$ contains $K_1$ and at least one element of $K_2$, by \reflem{lem:SplitCompatibleEquivalence}. Thus, $K_1\compatible\mathbb{L}$ while $K_2\compatible L_2$, $K_2\not\compatible L_1$. This means $\xindex{\mathbb{K}}{\mathbb{L}}=1$.	  
	\item{$\xindex{\mathbb{L}}{\mathbb{K}}=3$: } if both $L_1$ and $L_2$ cross $\mathbb{K}$, then $L_j\cap K_i\neq\varnothing$ for all $i,j\in{1,2}$, implying both $K_1$ and $K_2$ cross $\mathbb{L}$, as desired. \qedhere
\end{itemize}
\end{proof}

We are now ready to construct the graph $\projector{\mathbb{K}}$:
\begin{definition}[Projector Graph] \label{def.Projector} Let $\mathbb{K}=\{K_1,K_2\}$ be a partial split, and set $K=K_1\cup K_2$. Then $\projector{\mathbb{K}}$ is defined to be the directed graph with vertex set $\BT{\indexset}(K)$, and all edges of the form $(\treeA,G)\in\NNIDedgeset$ such that $I\ldf\parentCL{G,\treeA}\in\DeepXSplit{\treeA;\mathbb{K}}$ and one of the following holds:
\begin{enumerate}
	\item \label{it.ProjectorType1} $I$ is of type 1, and $\complementLCL{G}{\treeA},\complementLCL{I}{\treeA}\subseteq K_i$ for some $i\in\{1,2\}$;
	\item \label{it.ProjectorType2} $I$ is of type 2.
\end{enumerate}
\end{definition}
\begin{figure}[htb]
\begin{center}
\includegraphics[width=0.35\textwidth]{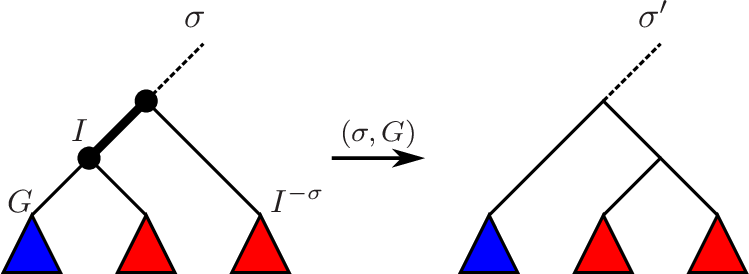} \\

\vspace{1.5em}
\includegraphics[width=.85\textwidth]{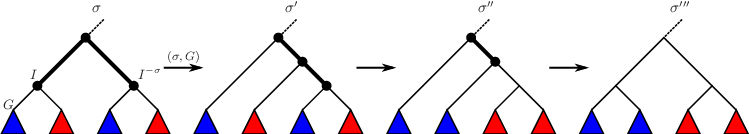} 
\\
\end{center}
\vspace{-2mm}
\caption{Different types of incompatibility~ ---~ Type 1 (above) and  Type 2 (below)~ ---~ of a tree $\treeA\in\BT{\indexset}(K)$, $K=K_1\cup K_2$, with the split $\mathbb{K}=\{K_1,K_2\}$, and the NNI moves suggested by $\projector{\mathbb{K}}$ to resolve them. Clusters are colored blue or red according to their being contained in $K_1$ or $K_2$, respectively. Thickened vertices represent the recombinants affected by these moves.}
\label{fig:DeepXSplit}
\end{figure}
The following elementary property of edges in $\projector{\mathbb{K}}$ is crucial:
\begin{lemma}\label{lem:norm changes in projector graph} Suppose $\treeA\in\BT{\indexset}(K)$, $(\treeA,G)$ is an edge of $\projector{\mathbb{K}}$ and $\treeB=\NNI(\treeA,G)$. Then $\crossnorm{\mathbb{K}}{\treeA}=\crossnorm{\mathbb{K}}{\treeB}+1$.
\end{lemma}
\begin{proof} Let $I=\parentCL{G,\treeA}$ and let $J=\complementLCL{I}{\treeA}\cup\complementLCL{G}{\treeA}$ be the cluster replacing $I$ in $\treeB$. Also, set $M=\grandparentCL{G,\treeA}\in\cluster{\treeA}\cap\cluster{\treeB}$. 
In the transition from $\treeA$ to $\treeB$ only the clusters $I,J$ and $M$ change (or lose, or acquire) their child splits. Therefore:
\begin{equation}
	\crossnorm{\mathbb{K}}{\treeB}=
		\crossnorm{\mathbb{K}}{\treeA}
		-\xindex{\childCL{I,\treeA}}{\mathbb{K}}
		+\xindex{\childCL{J,\treeB}}{\mathbb{K}}
		-\xindex{\childCL{M,\treeA}}{\mathbb{K}}
		+\xindex{\childCL{M,\treeB}}{\mathbb{K}}
\end{equation}
\reffig{fig:DeepXSplit} demonstrates without loss of generality that, in the case when $I$ is of type 1 with respect to $\mathbb{K}$ the values of the above crossing indices are $0$, $0$, $1$ and $0$, respectively, resulting in a total decrease of one unit. The case when $I$ is of type $2$ produces the respective values of $0$, $1$, $3$ and $1$, also resulting in a total decrease of one unit.
\end{proof}
\begin{lemma}\label{lem:image of a projection} The following are equivalent for a vertex $\treeA\in\BT{\indexset}(K)$ of $\projector{\mathbb{K}}$:
\begin{enumerate}
	\item $\crossnorm{\mathbb{K}}{\treeA}>0$;
	\item $\projector{\mathbb{K}}$ contains an edge exiting $\treeA$;
	\item $\treeA\notin\BT{\indexset}(\mathbb{K})$.
\end{enumerate}
\end{lemma}
\begin{proof} First observe that, since $K$ is a cluster of $\treeA$, all clusters $I'\in\cluster{\treeA}$ not contained in $K$ have $\xindex{I'}{\mathbb{K}}=0$. 

\smallskip
\mbox{$(1)\THEN(2)$.\; } By the preceding observation, if $\crossnorm{\mathbb{K}}{\treeA}>0$ then $\treeA$ has a sub-cluster of $K$ whose child split is incompatible with $\mathbb{K}$. By \reflem{lem:recombinants_characterize_splits}, $\treeA$ then has a cluster $I\subsetneq K$ which is a recombinant of $\mathbb{K}$. Picking $G$ to be an appropriate $\treeA$-child of $I$ provides the required edge $(\treeA,G)$.

\smallskip
\mbox{$(2)\THEN(3)$.\; } Suppose $(\treeA,G)$ is an edge in $\projector{\mathbb{K}}$. Then $I:=\parentCL{G,\treeA}$ is incompatible with $\mathbb{K}$, proving (3).

\smallskip
\mbox{$(3)\THEN(1)$.\; } Finally, if $\treeA\notin\BT{\indexset}(\mathbb{K})$ then $\treeA$ contains a recombinant $I$ whose parent $M=\parentCL{I,\treeA}$ then must satisfy $\xindex{\childCL{M,\treeA}}{\mathbb{K}}>0$, resulting in $\crossnorm{\mathbb{K}}{\treeA}>0$.
\end{proof}

\begin{definition}[Projection] Let $\mathbb{K}=\{K_1,K_2\}$ be a partial split, and set $K=K_1\cup K_2$. For any $\treeA\in\BT{\indexset}(K)$ we define its \emph{projection to $\BT{\indexset}(K) \cap\BT{\indexset}(\mathbb{K})$} to be the tree $\treeC = \myproj{\treeA;\mathbb{K}}\in \BT{\indexset}(K) \cap\BT{\indexset}(\mathbb{K})$ whose clusters are of one of the following forms:
\begin{enumerate}[label=(\alph*)]
	\item $I\in\cluster{\treeA}$ with $I\cap K=\varnothing$ or $K\subseteq I$;
	\item $I\cap K_i\in\cluster{\treeC}$, $i\in\{1,2\}$ where $I\in\cluster{\treeA}$ (and $I\subseteq K$).
\end{enumerate}
\end{definition}
\begin{remark}\label{rem:projection is well defined} The tree $\myproj{\treeA;\mathbb{K}}$ is a well-defined binary tree in $\BT{\indexset}(K) \cap \BT{\indexset}(\mathbb{K})$ by \reflem{lem:BinTreeRestriction} (applied to $\BT{K}$).
\end{remark}

We are ready to state the main result of this section:
\begin{theorem}\label{thm:projectors work} The directed graph $\projector{\mathbb{K}}$ contains no directed cycles. Moreover, for every $\treeA\in\BT{\indexset}(K)$, every maximal directed path of $\projector{\mathbb{K}}$ emanating from $\treeA$ terminates at the tree $\myproj{\treeA;\mathbb{K}}\in \BT{\indexset}(K) \cap \BT{\indexset}(\mathbb{K})$ and has length $\crossnorm{\mathbb{K}}{\treeA}$.
\end{theorem}
\begin{proof} Denote $\Gamma\ldf\projector{\mathbb{K}}$ for short. By \reflem{lem:norm changes in projector graph}, the function $\crossnorm{\mathbb{K}}{\cdot}$ decreases by a unit along each edge of $\Gamma$, implying the absence of directed cycles in the graph. In particular, for each $\treeA\in\BT{\indexset}(K)$, the length of a directed path in $\Gamma$ emanating from $\treeA$ is bounded above by $\crossnorm{\mathbb{K}}{\treeA}$. Since, by \reflem{lem:image of a projection}, $\treeA\in\BT{\indexset}(K)$ has an exiting edge in $\Gamma$ if and only if $\crossnorm{\mathbb{K}}{\treeA}>0$, we conclude that all maximal directed paths in $\Gamma$ emanating from $\treeA$ have length exactly $\crossnorm{\mathbb{K}}{\treeA}$ and terminate in $\BT{\indexset}(K) \cap \BT{\indexset}(\mathbb{K})$.

It will be useful to henceforth denote
\begin{equation}\label{eqn:projector paths}
	\projpaths{\mathbb{K}}{\treeA}\ldf
	\left\{
		\mathbf{p}\,\big|\,
		\mathbf{p}\text{ is a maximal directed path in }\projector{\mathbb{K}}\text{ emanating from }\treeA
	\right\}
\end{equation}
It remains to prove that every $\mathbf{p}\in\projpaths{\mathbb{K}}{\treeA}$ terminates in $\myproj{\treeA;\mathbb{K}}$.

We will prove the remaining assertion of the proposition by induction on $\crossnorm{\mathbb{K}}{\treeA}$. More precisely, for any non-negative integer $k$ let $S(k)$ denote the statement that for every $\treeB\in\BT{\indexset}(K)$ satisfying $\crossnorm{\mathbb{K}}{\treeB}\leq k$ every path in $\projpaths{\mathbb{K}}{\treeB}$ terminates in $\myproj{\treeB;\mathbb{K}}$. Observing that $S(0)$ holds true by construction, we assume $S(k)$ holds for some $k\geq 0$ and deduce $S(k+1)$. 

Suppose $\treeA$ has $\crossnorm{\mathbb{K}}{\treeA}=k+1$. Once again, consider any directed edge $(\treeA,G)$ in $\projector{\mathbb{K}}$, and write $\treeB=\NNI(\treeA,G)$ with $\crossnorm{\mathbb{K}}{\treeB}=k$. Let $\treeC$ and $\treeC'$ denote the projections of $\treeA$ and $\treeB$ to $\BT{\indexset}(K) \cap \BT{\indexset}(\mathbb{K})$. Finally, letting $I=\parentCL{G,\treeA}$ and $J=\complementLCL{G}{\treeA}\cup\complementLCL{I}{\treeA}$ we recall that $\cluster{\treeB}=(\cluster{\treeA}\minus\{I\})\cup\{J\}$. We observe the following:
\begin{itemize}
	\item For any set $Q\subseteq\indexset$ satisfying $Q\cap K=\varnothing\;\vee\; K\subseteq Q$	and for any tree $\treeA'$ lying on a path in $\projpaths{\mathbb{K}}{\treeA}$~ ---~ for the trees $\treeB,\treeC$ and $\treeC'$ in particular~ ---~ one has $Q\in\cluster{\treeA}$ if and only if $Q\in\cluster{\treeA'}$. Thus, $\cluster{\treeC}\setminus\cluster{\treeC'}$ consists only of proper subsets of $K$.
	\item For a cluster $Q\subseteq K$ of $\treeA$ with $Q\neq I$ we have $Q\cap K_i\in\cluster{\treeC}\myTHEN Q\cap K_i\in\cluster{\treeC'}$ for $i\in\{1,2\}$ because $\cluster{\treeA}\minus\{I\}\subset\cluster{\treeB}$.
	\item Finally, we consider the clusters $I\cap K_i$: since $I\in\DeepXSplit{\treeA;\mathbb{K}}$, the sets $I\cap K_i$ are precisely the children of $I$ in $\treeA$, which makes them clusters of $\treeB$; since $I\cap K_i\subset K_i$, they are also clusters of $\treeC'$.	
\end{itemize}
To summarize, we have found out that $\cluster{\treeC}\subseteq\cluster{\treeC'}$. By the maximality of $\cluster{\treeC}$ as a nested family (\refrem{rem:MaximumCardinality} and \refrem{rem:projection is well defined}) they must be equal and we conclude that $\treeC=\treeC'$. Applying the induction hypothesis, we deduce that every path in $\projpaths{\mathbb{K}}{\treeA}$ starting with the edge $(\treeA,G)$ must terminate in $\treeC$. Since the choice of edge $(\treeA,G)$ was arbitrary, we are done.
\end{proof}

\subsection{The Navigation Distance}
\label{sec:dnav} 

The following result has the flavor of a commutation relation between {\it different} projector graphs:
\begin{lemma}\label{lem:nested projectors commute} Fix a pair of distinct partial splits $\mathbb{K}=\{K_1,K_2\}$ and $\mathbb{L}=\{L_1,L_2\}$. Setting $K=K_1\cup K_2$ and $L=L_1\cup L_2$ assume in addition that
$\{K,K_1,K_2\}\compatible\{L,L_1,L_2\}$. Then, for any $\treeA\in\BT{\indexset}(K)$ and any edge $(\treeA,G)\in\projector{\mathbb{K}}$ one has $\crossnorm{\mathbb{L}}{\NNI(\treeA,G)}=\crossnorm{\mathbb{L}}{\treeA}$.
\end{lemma}
\begin{proof} As before, set $\treeB=\NNI(\treeA,G)$ and consider the sets $I=\parentCL{G,\treeA}$, $J=\complementLCL{G}{\treeA}\cup\complementLCL{I}{\treeA}$ and $M=\grandparentCL{G,\treeA}$~ ---~ all contained in the cluster $K\in\cluster{\treeA}\cap\cluster{\treeB}$~ ---~ and recall that $\cluster{\treeB}=(\cluster{\treeA}\minus\{I\})\cup\{J\}$. Without loss of generality, $G\subseteq K_1$ and $\complementLCL{G}{\treeA}\subseteq K_2$.

Once again we observe that the transition from $\treeA$ to $\treeB$ affects only the crossing indices of the clusters $I,J,M$ (which are all contained in $K$) as follows:
\begin{equation}
	\crossnorm{\mathbb{L}}{\treeB}=
		\crossnorm{\mathbb{L}}{\treeA}
		-\underbrace{\xindex{\childCL{I,\treeA}}{\mathbb{L}}}_{\alpha}
		+\underbrace{\xindex{\childCL{J,\treeB}}{\mathbb{L}}}_{\beta}
		-\underbrace{\xindex{\childCL{M,\treeA}}{\mathbb{L}}}_{\gamma}
		+\underbrace{\xindex{\childCL{M,\treeB}}{\mathbb{L}}}_{\delta}
\end{equation}
Note that $K\neq L$, since otherwise the compatibility assumption and  \reflem{lem:SplitCompatibleEquivalence} would have forced $\mathbb{K}=\mathbb{L}$. 

Suppose now that $K\cap L=\varnothing$. In this case the restrictions of $\mathbb{L}$ to $I,J,M$ are all trivial and the corresponding crossing indices are all zero.

Suppose $K\subsetneq L$. Then, without loss of generality, we have $K\subseteq L_1$ by \reflem{lem:SplitCompatibleEquivalence} and all children of $I,J,M$ in $\treeA$ and $\treeB$ (as relevant) are compatible with $\mathbb{L}$, resulting again in zero crossing indices.

Since $K\compatible L$, $K\neq L$, we need only consider two cases (we refer the reader again to \reffig{fig:DeepXSplit} for an illustration):
\begin{itemize}
	\item\mbox{\bf $L\subseteq K_1$. } We have $\childCL{I,\treeA}\res{L}=\{G\cap L,\varnothing\}$ and therefore $\alpha=0$. Also, $\childCL{J,\treeB}\res{L}=\{\varnothing,\complementLCL{I}{\treeA}\cap L\}$, so that $\beta=0$. Finally, $\childCL{M,\treeA}\res{L}=\childCL{M,\treeB}\res{L}=\{G\cap L,\complementLCL{I}{\treeA}\cap L\}$ produces $\gamma=\delta$.
	\item\mbox{\bf $L\subseteq K_2$. } In this case we have $\childCL{I,\treeA}\res{L}=\{\varnothing,\complementLCL{G}{\treeA}\cap L\}$ and $\alpha$ is zero again. Similarly, observe that $\childCL{M,\treeB}\res{L}=\{\varnothing,J\cap L\}$ gives $\delta=0$. At the same time, $\childCL{J,\treeB}\res{L}=\childCL{M,\treeA}\res{L}=\{\complementLCL{G}{\treeA}\cap L,\complementLCL{I}{\treeA}\cap L\}$, so that $\beta=\gamma$.
\end{itemize}
This finishes the proof.
\end{proof}

Any pair of binary trees in $\BT{\indexset}$ has a common cluster (the cluster $\indexset$, for example), and one might hope to quantify the discrepancy between a pair of trees by counting common clusters which split differently in the two trees (perhaps, somehow accounting for the depth of these clusters). This motivates:
\begin{definition}\label{def:CrossingSplits}
For any $\treeA,\treeC \in \bintreetopspace_{J}$, let $\CommonXSplits{\treeA,\treeC}$ denote the set
\begin{equation}\label{eq:CrossingSplits} 
	\CommonXSplits{\treeA,\treeC} \ldf \crl{
		\CommonXCL \in \cluster{\treeA} \cap \cluster{\treeC} \big | \, \childCL{\CommonXCL,\treeA} \neq \childCL{\CommonXCL,\treeC}
		}.
\end{equation}
\end{definition}
\begin{remark}\label{rem:common crossing splits characterize equality} It is easy to see that, in $\BT{\indexset}$, $\treeA=\treeB$ if and only if $\CommonXSplits{\treeA,\treeB}=\varnothing$.
\end{remark}
\begin{corollary} For all $\treeA,\treeB\in\BT{\indexset}$ we have $\displaystyle \CommonXSplits{\treeA,\treeB} \ldf \crl{ \CommonXCL \in \cluster{\treeA} \cap \cluster{\treeB} \big | \, \childCL{\CommonXCL,\treeA} \not\compatible \childCL{\CommonXCL,\treeB} }$.
\end{corollary}
\begin{proof} Follows directly from \reflem{lem:SplitCompatibleEquivalence} and the definitions.
\end{proof}

Given a prescribed target tree $\treeB\in\BT{\indexset}$, the projector graphs introduced above give rise to a tool for achieving planned reductions in the number of clusters in $\CommonXSplits{\treeA,\treeB}$ at a given depth, for any tree $\treeA\in\BT{\indexset}$. More formally, consider the following construction:
\begin{definition}[Navigation Graph]\label{def:navigation to a point} Let $\treeB\in\BT{\indexset}$. Then $\NNIdir{\indexset,\treeB}$ denotes the directed subgraph of the NNI graph $\NNIdir{\indexset}$ with vertex set $\BT{\indexset}$ and all the edges $(\treeA,G)$ for which there exists a cluster $K\in\CommonXSplits{\treeA,\treeB}$ satisfying $(\treeA,G)\in\Gamma_\indexset(\childCL{K,\treeB})$.
\end{definition}
We proceed to prove statements about the navigation graph analogous to those we have shown to hold for the projector graphs. It is time to introduce:
\begin{definition}[Navigation Distance]\label{def:navigation distance} Let $\treeA,\treeB\in\BT{\indexset}$. We define the {\it navigation distance} from $\treeA$ to $\treeB$ to be:
\begin{eqnarray}
	\distNav\prl{\treeA,\treeB}
	&\ldf&	\sum_{K\in\cluster{\treeB}}
		\crossnorm{\childCL{K,\treeB}}{\treeA}\label{eqn:distNav crossnorms}\\
	&=&	\sum_{K\in\cluster{\treeB}}\sum_{L\in\cluster{\treeA}}
		\xindex{\childCL{L,\treeA}}{\childCL{K,\treeA}}\label{eqn:distNav xindex}
\end{eqnarray}
We also define the \emph{special crossing matrix} $\SMat\prl{\treeA,\treeB}$ by
\begin{equation}\label{eq.SMat}
\SMat\prl{\treeA,\treeB}_{K,L}:=\xindex{\childCL{L,\treeA}}{\childCL{K,\treeB}}, \quad \forall K \in \cluster{\treeA}, L \in \cluster{\treeB}.
\end{equation}
Thus, $\distNav$ coincides with the standard $1$-norm of the special crossing matrix.
\end{definition}

\begin{theorem}\label{thm:navigation graphs work} For any $\treeB\in\BT{\indexset}$ the graph $\NNIdir{\indexset,\treeB}$ has no directed cycles. Moreover, for any $\treeA\in\BT{\indexset}$ every maximal directed path in $\NNIdir{\indexset,\treeB}$ emanating from $\treeA$ terminates in $\treeB$ and has length $\distNav\prl{\treeA,\treeB}$. We will refer to such paths as \emph{navigation paths from $\treeA$ to $\treeB$}. 
\end{theorem}
\begin{proof} First, observe from equation \eqref{eqn:distNav crossnorms} that $\distNav\prl{\treeA,\treeB}$ is zero if and only if $\crossnorm{\mathbb{K}}{\treeA}=0$ for every pair $\mathbb{K}$ of siblings in $\treeB$. By \reflem{lem:image of a projection}, this is equivalent to saying that $\treeA\in\BT{\indexset}(\mathbb{K})$ for every pair of siblings in $\treeB$, or, in other words, that $\treeA=\treeB$. Moreover, note that $\distNav\prl{\treeA,\treeB}>0$ implies there is an edge of $\NNIdir{\indexset,\treeB}$ exiting $\treeA$: indeed, if $\treeA\neq\treeB$ then there exists a $K\in\CommonXSplits{\treeA,\treeB}$ (\refrem{rem:common crossing splits characterize equality}), so that $\treeA\notin\BT{\indexset}(\childCL{K,\treeB})$; \reflem{lem:image of a projection} guarantees an edge of $\projector{\childCL{K,\treeB}}$ exiting $\treeA$, which, by definition, is also an edge of $\NNIdir{\indexset,\treeB}$.

Suppose now $(\treeA,G)$ is an edge in $\NNIdir{\indexset,\treeB}$. That is, there exists $K\in\CommonXSplits{\treeA,\treeB}$ such that $(\treeA,G)\in\projector{\mathbb{K}}$ where $\mathbb{K}=\childCL{K,\treeB}$. 

Suppose there were more than one such $K$, that is: suppose $K,L\in\CommonXSplits{\treeA,\treeB}$, $K\neq L$, such that $I:=\parentCL{G,\treeA}$ is contained in both $K$ and $L$, and such that $\childCL{I,\treeA}$ is incompatible both with $\childCL{K,\treeB}$ and $\childCL{L,\treeB}$. Since $\varnothing\neq I\subseteq K\cap L$ and $K\compatible L$, we may assume $K\subsetneq L$. But then $K,L\in\cluster{\treeB}$ and $K\neq L$ implies $K$ is contained in a $\treeB$-child of $L$, denoted $L_1$. As $I\subseteq K$, we conclude that both $\treeA$-children of $I$ are contained in $L_1$~ ---~ a contradiction to the assumption that $\childCL{I,\treeA}$ and $\childCL{L,\treeB}$ are incompatible.

Let $\treeA'=\NNI(\treeA,G)$. Then, by \reflem{lem:norm changes in projector graph}, we have $\crossnorm{\childCL{K,\treeB}}{\treeA'}=\crossnorm{\childCL{K,\treeB}}{\treeA}-1$. Moreover, \reflem{lem:nested projectors commute} guarantees $\crossnorm{\childCL{L,\treeB}}{\treeA'}=\crossnorm{\childCL{L,\treeB}}{\treeA}$ for all $L\in\cluster{\treeB}$, $L\neq K$. Applying equation \eqref{eqn:distNav crossnorms} we obtain
\begin{equation}
	\distNav\prl{NNI(\treeA,G),\treeB}=\distNav\prl{\treeA,\treeB}-1\,.
\end{equation}
Thus, $\NNIdir{\indexset,\treeB}$ contains no directed cycles, and every maximal directed path in $\NNIdir{\indexset,\treeB}$ emanating from a fixed $\treeA\in\BT{\indexset}$ terminates after precisely $\distNav\prl{\treeA,\treeB}$ steps. By the preceding paragraph, every such path may only terminate in $\treeB$.
\end{proof}
The solution to the navigation problem implied by this theorem yields the following (very crude) bounds on the performance of the corresponding reactive navigation algorithm:
\begin{corollary}[Navigation Complexity] \label{cor:NNINavComplexity} The length of a navigation path through $\NNIgraph_{\indexset}$ does not exceed $\bigO{\card{\indexset}^2}$.
\end{corollary}
\begin{proof} Let $n=\card{\indexset}$. For all $\treeA,\treeB\in\BT{\indexset}$ we have $\card{\cluster{\treeB}}=\bigO{n}$, implying $\SMat\prl{\treeA,\treeB}$ has $\bigO{n^2}$ entries. The value of $\distNav\prl{\treeA,\treeB}$ never exceeds three times the number of entries in $\SMat\prl{\treeA,\treeB}$.
\end{proof}
\begin{corollary}\label{cor:NNINavMoveCost} Given $\treeA,\treeB\in\BT{\indexset}$, computing an edge of $\NNIdir{\indexset,\treeB}$ exiting $\treeA$ may be done in $\bigO{\card{\indexset}}$ time.
\end{corollary}
\begin{proof} Using a look-up table for the clusters of $\treeB$ \cite{day_joc1985}, a cluster $K\in\CommonXSplits{\treeA,\treeB}$ may be found in linear time by a traversal of $\treeA$. Next, an appropriate recombinant cluster may be found in linear time by post-order traversal of $\treeA\res{K}$ (compare with proof of \refprop{prop:CCdistanceCost}).
\end{proof}

The last theorem emphasizes the crucial role of the fact that all navigation paths from $\treeA$ to $\treeB$ have the same length, equal to $\distNav\prl{\treeA,\treeB}$, irrespective of the order in which one chooses to resolve the incompatibilities between the two trees. 
We will now consider additional applications of the last theorem which will help us clarify the geometry of the navigation distance and its relationship to the other dissimilarities mentioned in this paper.
\begin{lemma}\label{lem:dist to root split} Let $\mathbb{K}=\{K_1,K_2\}$ be a partial split, let $\treeB\in \BT{\indexset}\prl{K_1, K_2, K_1\cup K_2}$ and $\treeA\in\BT{\indexset}(K_1\cup K_2)$. Then:
\begin{enumerate}[label=(\alph*)] 
	\item \label{it.dist2rootsplit1} $\projector{\mathbb{K}}$ is contained in $\NNIdir{\indexset,\treeB}$;
	\item \label{it.dist2rootsplit2} Let $\treeA'=\myproj{\treeA;\mathbb{K}}$, then:
	\begin{equation}
		\distNav\prl{\treeA,\treeB}=
			\distNav\prl{\treeA,\treeA'}+
			\distNav\prl{\treeA',\treeB}
	\end{equation}
	\item \label{it.dist2rootsplit3} Finally, $\distNav\prl{\big.\treeA, \BT{\indexset}\prl{K_1, K_2, K_1\cup K_2}}=\crossnorm{\mathbb{K}}{\treeA}$.
\end{enumerate}
\end{lemma}
\begin{proof} For \reflem{lem:dist to root split}\ref{it.dist2rootsplit1}, let $(\treeA,G)$ be an edge of $\projector{\mathbb{K}}$. In particular, $\treeA\notin\BT{\indexset}(\mathbb{K})$ so that $K\in\CommonXSplits{\treeA,\treeB}$ which produces $(\treeA,G)\in\NNIdir{\indexset,\treeB}$ by definition. 

For \reflem{lem:dist to root split}\ref{it.dist2rootsplit2}, let $\mathbf{p}$ be a maximal path in $\projector{\mathbb{K}}$ emanating from $\treeA$. Then the endpoint of $\mathbf{p}$ is $\treeA':=\myproj{\treeA;\mathbb{K}}$ by \refthm{thm:projectors work}. Now apply \reflem{lem:dist to root split}\ref{it.dist2rootsplit1} and \refthm{thm:navigation graphs work} to extend $\mathbf{p}$ to a navigation path $\widetilde{\mathbf{p}}$ in $\NNIdir{\indexset,\treeB}$ from $\treeA$ to $\treeB$. Then:
\begin{equation}\label{eqn:projections can come first}
	\distNav\prl{\treeA,\treeB}=
	\length{\widetilde{\mathbf{p}}}=
	\length{\mathbf{p}}+\distNav\prl{\treeA',\treeB}=
	\treenorm{\mathbb{K}}{\treeA}+\distNav\prl{\treeA',\treeB}\,,
\end{equation}
as required.

Finally, for \reflem{lem:dist to root split}\ref{it.dist2rootsplit3}, pick $\treeB$ above to be a tree of $\BT{\indexset}\prl{K_1, K_2, K_1\cup K_2}$ with $\distNav\prl{\treeA,\treeB}$ minimal. By the construction above, $\treeA'\in \BT{\indexset}\prl{K_1, K_2,K_1\cup K_2}$ satisfies $\distNav\prl{\treeA,\treeA'}\leq\distNav\prl{\treeA,\treeB}$ while $\mathbf{p}$ is a navigation path from $\treeA$ to $\treeA'$. Thus $\treeA'$ must coincide with $\treeB$, and \eqref{eqn:projections can come first} reduces to the desired equality.
\end{proof}
\begin{corollary}\label{cor:NavSplitDistanceCost}
For any bipartition $\crl{L,R}$ of $\indexset$ and $\treeA \in \BT{\indexset}$, the navigation distance $\distNav\prl{\treeA, \BT{\indexset}\prl{L,R}}$ can be computed in linear time, $\bigO{\card{\indexset}}$. 
\end{corollary}
\begin{proof} Similarly to the proof of \refprop{prop:CMdistanceCost}, the crossing indices of $\treeA$-clusters with $\crl{L,R}$ can be determined in $\bigO{\card{\indexset}}$ time using  \reflem{lem:SplitCompatibleEquivalence} and by post order traversal of $\treeA$.
Therefore, by \reflem{lem:dist to root split} and \refthm{thm:projectors work}, the quantity $\distNav\prl{\treeA, \BT{\indexset}\prl{\indexset_L, \indexset_R}}$ can be computed in $\bigO{\card{\indexset}}$ by a complete traversal of $\treeA$.
\end{proof}

\begin{lemma} \label{lem.NavSplitDistCost}
For any bipartition  $\crl{L,R}$ of $\indexset$ and $\treeA \in \BT{\indexset}$,  
an NNI navigation path in $\projector{L,R}$ joining $\treeA$ to $ \BT{\indexset}\prl{L,R}$ can be computed in $\bigO{\card{\indexset}}$ time.
\end{lemma}
\begin{proof}
As illustrated in \reffig{fig:CrossingClusters}, since $\ancestorCL{I,\treeA} \subseteq \CLXSplitSet{\treeA; \crl{L,R}} \cup \crl{\indexset}$ for any $I \in \CLXSplitSet{\treeA; \crl{L,R}}$, the vertices and branches of $\treeA$ associated with clusters in $\CLXSplitSet{\treeA; \crl{L,R}} \cup \crl{\indexset}$ defines a tree structure, containing all the information required to compute the navigation distance $\distNav\prl{\treeA, \BT{\indexset}}\prl{L,R} = \norm{\treeA}_{\crl{L,R}}$ (\reflem{lem:dist to root split}.\ref{it.dist2rootsplit3}).
Hence, one can construct an NNI navigation path by a complete post-order traversal of this tree structure as follows:

\begin{enumerate}[noitemsep]
\item Set $k \leftarrow 0$ and $\treeA_0 \leftarrow \treeA$, and compute $\CLXSplitSet{\treeA_0;\crl{L,R}}$.
\item Find a cluster $I_0 \in \DeepXSplit{\treeA_0; \crl{L,R}}$ by a  post-order traversal of incompatible clusters $\CLXSplitSet{\treeA_0;\crl{L,R}}$ of $\treeA_0$.
\item While ($\CLXSplitSet{\treeA_k;\crl{L,R}} \neq \varnothing$)
\begin{enumerate}[noitemsep]
\item If $I_k \in \DeepXSplit{\treeA_k; \crl{L,R}}$ is Type 1, then, as illustrated in \reffig{fig:DeepXSplit}(top), choice a grandchild $G_k \in \childCL{I_k, \treeA_k}$ such that $\complementLCL{G_k}{\treeA_k}, \complementLCL{I_k}{\treeA_k} \subseteq L$ or $\complementLCL{G_k}{\treeA_k}, \complementLCL{I_k}{\treeA_k} \subseteq R$, and set
\begin{equation}
\treeA_{k+1} \leftarrow \NNI\prl{\treeA_k, G_k}, \quad
\CLXSplitSet{\treeA_{k+1};\crl{L,R}} \leftarrow \CLXSplitSet{\treeA_k;\crl{L,R}} \setminus\crl{I_k}, \quad
I_{k+1} \leftarrow \parentCL{I_k, \treeA_k}, \quad k \leftarrow k+1. \nonumber
\end{equation}
 
\item If $I_k \in \DeepXSplit{\treeA_k; \crl{L,R}}$ is Type 2, then, as illustrated in \reffig{fig:DeepXSplit}(bottom), choice $G_k \in \childCL{I_k, \treeA_k}$ and  $G_{k+1} \in \childCL{\complementLCL{I_k}{\treeA_k}, \treeA_k}$ such that $G_k, G_{k+1} \subseteq L$ or $G_k, G_{k+1} \subseteq R$, and $G_{k+2} = \complementLCL{G_k}{\treeA_k} \cup \complementLCL{G_{k+1}}{\treeA_k}$; and set
\begin{align}
\treeA_{k+1} \leftarrow \NNI\prl{\treeA_k, G_k}, \quad \treeA_{k+2} \leftarrow \NNI\prl{\treeA_{k+1}, G_{k+1}}, \quad \treeA_{k+3} \leftarrow \NNI\prl{\treeA_{k+2}, G_{k+2}}, \nonumber \\
 \CLXSplitSet{\treeA_{k+3};\crl{L,R}} \leftarrow \CLXSplitSet{\treeA_k;\crl{L,R}} \setminus\crl{I_k, \complementLCL{I_k}{\treeA_k}}, \quad  I_{k+3} \leftarrow \parentCL{I_k, \treeA_k}, \quad k \leftarrow k+3. \nonumber
\end{align}

\item Otherwise ($I_k$ and $\complementLCL{I_k}{\treeA_k}$ are Type 2 with $\childCL{I_k, \treeA_k} \compatible \crl{L,R}$ and $\childCL{\complementLCL{I_k}{\treeA_k}, \treeA_k} \not\compatible \crl{L,R}$), find a cluster $J_k \in \DeepXSplit{\treeA_k; \crl{L,R}}$ by a post-order traversal of incompatible clusters of the subtree of $\treeA_k$ rooted at $\complementLCL{I_k}{\treeA_k}$, and set $I_k \leftarrow J_k$.
\end{enumerate}
\item Return $\prl{\big.\treeA_k}_{k \in \brl{0, \,\norm{\treeA}_{\crl{L,R}}}}$ as an NNI navigation path starting at $\treeA$ and ending in $\BT{\indexset}\prl{L,R}$. 
\end{enumerate}

As discussed in the proof of \refprop{prop:CMdistanceCost}, all clusters of $\treeA$ incompatible with $\crl{L,R}$, i.e. $\CLXSplitSet{\treeA;\crl{L,R}}$ in Step 1, can be determined in $\bigO{\card{\indexset}}$ time.
Given $\CLXSplitSet{\treeA;\crl{L,R}}$, a cluster $I \in \DeepXSplit{\treeA; \crl{L,R}}$, in Step 2,  can be found in $\bigO{\card{\CLXSplitSet{\treeA; \crl{L,R}}}}\leq \bigO{\card{\indexset}}$ time  by a post-order traversal of incompatible clusters of $\treeA$.
Observe that the while loop terminates after at most $2\card{\CLXSplitSet{\treeA;\crl{L,R}}}$ iterations after a complete traversal of the tree structure defined by $\CLXSplitSet{\treeA;\crl{L,R}} \cup \crl{S}$  since $\card{\CLXSplitSet{\treeA_k;\crl{L,R}}}$ decreases at least by one unit after every two consecutive  iterations and 
a post-order subtree traversal in Step 3(c) is required only if the associated subtree is not explored yet.
Hence,  an NNI navigation path joining $\treeA$ to $ \BT{\indexset}\prl{L,R}$ can be found by a complete post-order traversal of $\treeA$ in $\bigO{\card{\indexset}}$ time.   
\end{proof}

The observation made in \reflem{lem:dist to root split} is a good example of how the dual representation of $\distNav$~ ---~ both in terms of paths in the NNI graph, and in terms of a closed-form formula quantifying inter-cluster incompatibility~ ---~ offers a practical compromise between the heretofore separate traditional approaches to constructing dissimilarities on $\BT{\indexset}$, those of edge comparison and of estimation of edit distances. A particular application of this dual nature is the decomposability of $\distNav$ (as defined in \cite{waterman_jtb1978}):
\begin{lemma}[Root Split Reduction]\label{lem:top split reduction} Fix $\treeB\in\BT{\indexset}$ and denote $\{L,R\}:=\childCL{\indexset,\treeB}$. Then for any $\treeA\in\BT{\indexset}$ one has:
\begin{equation}
	\distNav\prl{\treeA,\treeB}=
		\distNav\prl{\treeA,\BT{\indexset}(L,R)}+
		\distNav\prl{\treeA\res{L},\treeB\res{L}}+
		\distNav\prl{\treeA\res{R},\treeB\res{R}}
\end{equation}
\end{lemma}
\begin{proof} By \reflem{lem:dist to root split}(2) it suffices to prove 
\begin{equation}
	\distNav\prl{\myproj{\treeA;L,R},\treeB}=
		\distNav\prl{\treeA\res{L},\treeB\res{L}}+
		\distNav\prl{\treeA\res{R},\treeB\res{R}}
\end{equation}
By definition, $\cluster{\myproj{\treeA;L,R}}=\{\indexset\}\cup\cluster{\treeA\res{L}}\cup\cluster{\treeA\res{R}}$ so it suffices to prove: 
\begin{equation}
	\treeA\in\BT{\indexset}(L,R)\THEN
	\distNav\prl{\treeA,\treeB}=
		\distNav\prl{\treeA\res{L},\treeB\res{L}}+
		\distNav\prl{\treeA\res{R},\treeB\res{R}}
\end{equation}
At this stage, however, observe that $\cluster{\treeA\res{L}}$ and $\cluster{\treeA\res{R}}$ together exhaust the list of of clusters of $\treeA$ not equal to $\indexset$, with the same holding {\it ab initio} for $\treeB$. This allows us to finish the proof by applying \refthm{thm:navigation graphs work} separately in $\BT{L}$ and $\BT{R}$.
\end{proof}

The root split reduction of the NNI navigation dissimilarity may be used for its efficient computation:
\begin{corollary}\label{cor:distNavCost} The NNI navigation dissimilarity $\distNav$ on $\BT{\indexset}$ is computable in $\bigO{\card{\indexset}^2}$ time.
\end{corollary}
\begin{proof} Let $\treeA,\treeB\in\BT{\indexset}$ and $\crl{L,R} = \childCL{\indexset,\treeB}$. By the root split reduction above and the last corollary, $\distNav\prl{\treeA,\treeB}$ requires the computation of $\distNav\prl{\treeA, \BT{\indexset}\prl{\childCL{\indexset, \treeB}}}$ at a cost of $\bigO{\card{\indexset}}$ time, plus the computation of the restrictions $\treeA\res{L}$ and $\treeA\res{R}$, each of which can be computed using post-order traversal of $\treeA$ in $\bigO{\card{\indexset}}$ time.
Hence, computing $\distNav\prl{\treeA,\treeB}$ requires a complete (depth-first) traversal of $\treeB$ with each stage incurring at most a linear time cost in $\card{\indexset}$.
\end{proof}

\begin{corollary}
An NNI navigation path  joining $\treeA \in \bintreetopspace_{\indexset}$ to $\treeB \in \bintreetopspace_{\indexset}$ can be computed in $\bigO{\card{\indexset}^2}$ time.
\end{corollary}
\begin{proof}
Similar to the recursive expression of $\distNav$ in \reflem{lem:top split reduction}, an NNI navigation path joining $\treeA$ to $\treeB$ can 
be found using the decomposability property within a divide-and-conquer approach as follows: first obtain an NNI navigation path from $\treeA$ to $\bintreetopspace_{\indexset}\prl{\childCL{\indexset,\treeB}}$ in $\bigO{\card{\indexset}}$ (\reflem{lem.NavSplitDistCost}) and then find NNI navigation paths between subtrees.
Hence, this requires the pre-order traversal of $\treeB$ each of whose step costs $\bigO{\card{\indexset}}$. 
Thus, an NNI navigation path joining $\treeA$ to $\treeB$ can be recursively computed in $\bigO{\card{\indexset}^2}$ time, which completes the proof. 
\end{proof}

\subsection{Properties of the Navigation Dissimilarity}

%
\begin{proposition} \label{rem:NavDistanceNotMetric}
The NNI navigation dissimilarity $\distNav$ is positive definite and symmetric, but it is not a metric. 
\end{proposition}
\begin{proof} That $\distNav$ is positive definite follows directly from its definition. \reflem{lem:xindex is symmetric} proves it is symmetric and \refcor{cor:NotMetric} with \reffig{fig:CMatNormNavLengthNotMetric} shows where the triangle inequality fails.
\end{proof}

\begin{lemma}\label{lem:UpperBoundSplitNavPath} Let $\crl{L,R}$ be a bipartition of $\indexset$ and $\treeA\in\BT{\indexset}$. Then we have the tight bound:
\begin{equation}\label{eq:UpperBoundSplitNavPath}
	\distNav\prl{\treeA,\BT{\indexset}\prl{L,R}}\leq
		\card{\indexset}+\min\prl{\card{L}, \card{R}}-3.  
\end{equation}

\end{lemma}
\begin{proof} Denote $\mathbb{S}=\{L,R\}$. For any $\treeA\in\BT{\indexset}$ and $I\in\cluster{\treeA}$ observe that (i) $\xindex{\childCL{I,\treeA}}{\mathbb{S}}=0$ if $I$ is a singleton or $\card{I}=2$, and (ii) otherwise for larger clusters $\xindex{\childCL{I,\treeA}}{\mathbb{S}}$ equals $3$ or $1$ only if, respectively,  both clusters or only one cluster of $\childCL{I,\treeA}$ are incompatible with $\mathbb{S}$. Since there are at least $\card{S}+1$ clusters of the first kind, there are at most $\card{S}-2$ clusters of the second kind. Thus, applying \reflem{lem:dist to root split} and \refthm{thm:projectors work} we have
\begin{equation}
	\distNav\prl{\treeA,\BT{\indexset}\prl{L,R}}\leq(\card{\indexset}-2)+\card{\mathcal{X}}\,,
\end{equation}
where $\mathcal{X}$ is the set of all $I\in\cluster{\treeA}$ both of whose children are incompatible with $\mathbb{S}$. For each $I\in\mathcal{X}$ both $I\cap L$ and $I\cap R$ are non-singleton clusters of $\treeA\res{L}$ and $\treeA\res{R}$, respectively (each child of $I$ intersects each of $L,R$). Suppose now that $I,J\in\mathcal{X}$ are distinct. There are two cases, without loss of generality:
\begin{itemize}
	\item If $I\cap J=\varnothing$, then $I\cap L\neq J\cap L$ (and similarly for $R$);
	\item If $I\subsetneq J$, then $J$ has a child $I'$ disjoint from $I$, and this child must intersect $L$. Hence, $I\cap L\subsetneq J\cap L$.
\end{itemize}
We conclude that the map $I\mapsto I\cap L$ (respectively $I\cap R$) of $\mathcal{X}$ to $\cluster{\treeA\res{L}}$ (resp. to $\cluster{\treeA\res{R}}$) is injective, and has no singleton clusters in its image. Thus, $\card{\mathcal{X}}\leq\min\prl{\card{L}-1, \card{R}-1}$, proving the desired inequality.
 
The example $\treeA, \treeB \in \BT{\brl{n}}$ 
in \reffig{fig.distCMdistNavDiameter} with $\crl{L,R}=\childCL{\brl{n}, \treeB} =\crl{\crl{1},\crl{2,3, \ldots, n}}$ shows that the upper bound in \eqref{eq:UpperBoundSplitNavPath} is tight (where $\distNav\prl{\treeA,\BT{\indexset}{\prl{L,R}}}=n-2$).
\end{proof}

\begin{proposition}\label{prop:NavDistDiameter} $\diam{\BT{\indexset}, \distNav} = \frac{1}{2} \prl{\card{\indexset} - 1} \prl{ \card{\indexset}  -2}\,$ . 
\end{proposition}
\begin{proof}
We proceed by induction over $\card{\indexset}$, with the base case $\card{\indexset} = 2$ satisfying $\card{\BT{\indexset}} = 1$. The formula then holds trivially, as $\distNav=0$.

For the induction step assume $\card{\indexset}\geq 3$ and that $\treeA,\treeB \in \BT{\indexset}$ satisfy $\distNav\prl{\treeressym{K}{\treeA},\treeressym{K}{\treeB}} \leq \frac{1}{2}\prl{ \card{K} - 1} \prl{ \card{K} - 2}$ for every $K\in \childCL{\indexset,\treeB}=\{L,R\}$. 

Let $\mu=\min\prl{\card{L},\card{R}}$, and note that $\card{L}\card{R}=\mu(\card{\indexset}-\mu)$. We now apply the root split reduction (\reflem{lem:top split reduction}):

\noindent
\begin{align}
\distNav\prl{\treeA,\treeB}  
&= \underbrace{\distNav\prl{\treeA, \BT{\indexset}\prl{L,R}}}_{ \substack{\text{by \reflem{lem:UpperBoundSplitNavPath}} \\ \leq \, \card{\indexset}+\mu-3} } 
+ \underbrace{ \distNav\prl{\treeressym{L}{\treeA},\treeressym{L}{\treeB}}}_{\substack{\text{by induction} \\ \leq \,\frac{1}{2}\,\prl{ \card{L} - 1}\, \prl{ \card{L} -2}}}  
+ \underbrace{\distNav\prl{\treeressym{R}{\treeA}, \treeressym{R}{\treeB}}}_{\substack{\text{by induction}\\ \leq \, \frac{1}{2}\, \prl{ \card{R} - 1} \,\prl{  \card{R} -2} }}, \\ 
%
%
%
%
&\leq \tfrac{1}{2} \prl{\card{\indexset}-1} \prl{ \card{\indexset} - 2}  + \underbrace{\prl{1 - \mu} \prl{\card{\indexset} - \mu-2}}_{\text{non-positive whenever } \card{\indexset} \geq 3},\\
&\leq \tfrac{1}{2} \prl{\card{\indexset}-1} \prl{ \card{\indexset} - 2}.
\end{align}

Finally, note that the trees in \reffig{fig.distCMdistNavDiameter} realize this bound on the diameter. 
\end{proof}

\subsection{Relations with Other Tree Measures}

Like $\distCM$ (\refprop{prop:CM_RF_bound}), $\distNav$ is tightly bounded in terms of $\distRF$ as follows:
\begin{proposition} \label{prop:Nav_RF_bound}
Over $\bintreetopspace_{\indexset}$ one has $\distRF \leq \distNav \leq \tfrac{1}{2}\distRF^2 + \tfrac{1}{2}\distRF\,$ and both bounds are tight.
\end{proposition}
\begin{proof} Since $\distNav$ is realized by paths in the NNI graph we have $\distNNI\leq\distNav$. The lower bound then follows from $\distRF\leq\distNNI$ (\refcor{cor:NNI_RF_bound}). The bound is tight because
\begin{equation}
\distRF\prl{\treeA,\treeB} = 1 
\Leftrightarrow 
\distNNI\prl{\treeA,\treeB} = 1 
\Leftrightarrow 
\distNav\prl{\treeA,\treeB} = 1 . 
\end{equation}  
For the upper bound we argue by induction over $\card{\indexset}$, keeping in mind that for $\card{\indexset} = 2$ the result holds trivially. Suppose $\card{\indexset} \geq 3$. Now, if $\treeA$ and $\treeB$ have no common nontrivial clusters then $\distRF\prl{\treeA,\treeB} = \card{\indexset} - 2$ and the result follows from  \refprop{prop:NavDistDiameter}. Otherwise, let $I \in \cluster{\treeA} \cap \cluster{\treeB}$ be a nontrivial cluster and consider the tree $\treeA'$ obtained from $\treeA$ by replacing the branch $\treeA\res{I}$ with the branch $\treeB\res{I}$.

By theorem \refthm{thm:navigation graphs work} and by the definition of $\distRF$, respectively, we have:
\begin{eqnarray}
	\distNav\prl{\treeA,\treeB}&=&\distNav\prl{\treeA,\treeA'}+\distNav\prl{\treeA',\treeB}\\
	\distRF\prl{\treeA,\treeB}&=&\distRF\prl{\treeA,\treeA'}+\distRF\prl{\treeA',\treeB}
\end{eqnarray}
Let $\alpha=\distRF\prl{\treeA,\treeA'}$ and $\beta=\distRF\prl{\treeA',\treeB}$. Since $\distNav\prl{\treeA,\treeA'}=\distNav\prl{\treeA\res{I},\treeA'\res{I}}$ we may apply the induction hypothesis in $\BT{I}$ to conclude $\distNav\prl{\treeA,\treeA'}\leq\tfrac{1}{2}\alpha(\alpha+1)$. By pruning the trees $\treeA'$ and $\treeB$ at cluster $I$ we may apply the induction hypothesis in $\BT{\bar{S}}$, where $\bar{S}$ is the result of contracting $I$ to a single vertex, to conclude that $\distNav\prl{\treeA',\treeB}\leq\tfrac{1}{2}\beta(\beta+1)$. It then follows that:
\begin{equation}
\distNav\prl{\treeA,\treeB} \leq \tfrac{1}{2} \alpha \prl{\alpha +1} + \tfrac{1}{2} \beta \prl{\beta +1} 
\leq \tfrac{1}{2}\prl{\alpha + \beta}\prl{\alpha + \beta + 1} = \tfrac{1}{2}\distRF\prl{\treeA,\treeB} \prl{\distRF\prl{\treeA,\treeB} + 1}.
\end{equation}
\refprop{prop:NavDistDiameter} ensures this bound is tight. 
\end{proof}

\begin{proposition} \label{prop:Nav_CM_bound}
Over $\bintreetopspace_{\indexset}$ one has $\distNav\prl{\treeA,\treeB} \leq \frac{3}{2}\distCM\prl{\treeA,\treeB}\,$.
\end{proposition}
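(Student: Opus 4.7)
The plan is a two-step reduction followed by an injective counting argument. By \refthm{thm:NavPathLength}, $\distNav\prl{\treeA,\treeB} = \sum_{I \in \cluster{\treeA}} \sum_{J \in \cluster{\treeB}} \walkST\prl{\sumST_{\treeA,\treeB}\prl{I,J}}$, while $\distCM\prl{\treeA,\treeB} = \sum_{I,J} \indicatorfunc{I \not\compatible J}$. Since both trees are binary, every internal cluster has exactly two children and therefore $\sumST_{\treeA,\treeB}\prl{I,J} \in \crl{0,1,2}$. A direct check of $\walkST\prl{x} = \frac{1}{2}x\prl{x+1}$ at these three integer values gives the pointwise bound $\walkST\prl{x} \leq \frac{3}{2}x$, so it suffices to prove the linear estimate
\begin{align}\label{eq.NavCMreduction}
\sum_{I,J} \sumST_{\treeA,\treeB}\prl{I,J} \leq \distCM\prl{\treeA,\treeB}.
\end{align}

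To establish \refeqn{eq.NavCMreduction}, I would expand the left-hand side as a count of triples $\prl{I,J,A}$ with $A \in \childCL{I,\treeA}$, $A \cap J \neq \emptyset$, and $\prl{A \cap J} \not\compatible \treeressym{I}{\childCL{J,\treeB}}$. For each such triple the set-incompatibility furnishes some $B \in \childCL{J,\treeB}$ with $B \cap I \neq \emptyset$ and $\prl{A \cap J} \not\compatible \prl{B \cap I}$; fixing any total order on $\PowerSet{\indexset}$ and picking the smallest such $B$ defines a map $\phi\prl{I,J,A} \ldf \prl{A,B}$ into $\cluster{\treeA} \times \cluster{\treeB}$. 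Two observations then close the argument: (i) $A \not\compatible B$, so $\phi$ lands in the set counted by $\distCM$; and (ii) $\phi$ is injective, so no incompatible pair is overcharged.

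For (i), the containments $A \subseteq I$ and $B \subseteq J$ yield $A \cap B = \prl{A \cap J} \cap \prl{B \cap I}$, so nonemptiness of the intersection transfers from restricted to full clusters; moreover any $x \in \prl{A \cap J} \setminus \prl{B \cap I}$ satisfies $x \in A \subseteq I$, forcing $x \notin B$, hence $A \not\subseteq B$, and symmetrically $B \not\subseteq A$. For (ii), since $A,B$ are nonroot by construction, the parents $I = \parentCL{A,\treeA}$ and $J = \parentCL{B,\treeB}$ are uniquely recovered from the pair $\prl{A,B}$. Combining (i) and (ii) delivers \refeqn{eq.NavCMreduction}, and the initial pointwise bound then gives $\distNav \leq \frac{3}{2}\distCM$. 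The only subtlety is part (i) — verifying that passing to restricted clusters loses no incompatibility information — but the containments $A \subseteq I$ and $B \subseteq J$ make this essentially automatic, so no induction on the tree structure is needed.
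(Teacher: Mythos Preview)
Your proof is correct and follows essentially the same approach as the paper. Both arguments first apply the pointwise bound $\walkST\prl{x}\leq\frac{3}{2}x$ for $x\in\crl{0,1,2}$ to reduce the problem to $\sum_{I,J}\sumST_{\treeA,\treeB}\prl{I,J}\leq\distCM\prl{\treeA,\treeB}$, and then establish this by transferring a restricted-child incompatibility $\prl{A\cap J}\not\compatible\prl{B\cap I}$ to a full incompatibility $A\not\compatible B$ and recovering the parents via $I=\parentCL{A,\treeA}$, $J=\parentCL{B,\treeB}$. The paper writes this as a reindexing plus the monotonicity observation $A\compatible B\Rightarrow\prl{A\cap C}\compatible B$ for $B\subseteq C$; you package the identical content as an explicit injection $\phi$, with your claim (i) being precisely the contrapositive of that observation and your claim (ii) being the unique-parent property underlying the reindexing.
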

\begin{proof} Consider the closed form expression of $\distNav$ \refeqn{eqn:distNav xindex} in terms of crossing indices. Since the trivial clusters are compatible with any subset of $\indexset$, it will suffices to verify that, for each $I\in\cluster{\treeA}$ and $J\in\cluster{\treeB}$, one has: 
\begin{equation}
	\xindex{\childCL{I,\treeA}}{\childCL{J,\treeB}}\leq
	\tfrac{3}{2}\sum_{A\in\childCL{I,\treeA}}\sum_{B\in\childCL{J,\treeB}}\indicator\prl{A\not\compatible B}
\end{equation}
This verification is straightforward.
\end{proof}
The overall ordering of tree dissimilarities in \refcor{cor:NNI_RF_bound}, \refprop{prop:CM_CC_bound} and \refprop{prop:Nav_CM_bound} can be combined as:
\begin{theorem} \label{thm:DissimilarityOrder}
For non-degenerate hierarchies,
\begin{equation}
\frac{2}{3} \distRF \leq\frac{2}{3} \distNNI \leq \frac{2}{3}\distNav \leq \distCM \leq \distCC.
\end{equation}
\end{theorem}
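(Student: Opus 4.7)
The plan is to chain together four inequalities, each of which is either an explicit result already proved in the excerpt or a direct consequence of the definitions. First I would dispatch the rightmost two inequalities by quoting \refprop{prop:Nav_CM_bound} ($\distNav \leq \tfrac{3}{2}\distCM$, equivalently $\tfrac{2}{3}\distNav \leq \distCM$) and \refprop{prop:CM_CC_bound} ($\distCM \leq \distCC$). These were established earlier via a careful construction of an injection $q$ from incompatible cluster pairs into pairs of leaves where the two ultrametric representations disagree, and by bounding $\walkST$ on $\{0,1,2\}$ by $\tfrac{3}{2}x$ respectively.

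Next I would handle the leftmost inequality $\distRF \leq \distNNI$ by quoting \refcor{cor:NNI_RF_bound}, which follows from \reflem{lem.GraphMetricLowerBound} applied to the NNI-graph with unit edge weights, using the fact that NNI-adjacent trees differ by exactly one cluster, so $\distRF\prl{\treeA,\treeB}=1$ whenever $\prl{\treeA,\treeB}$ is an NNI edge, and then the triangle inequality for $\distRF$ propagates along any shortest NNI path.

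The only link not formulated as a standalone proposition earlier is the middle inequality $\distNNI \leq \distNav$. Here my plan is to observe directly from \refdef{def:NNIMove} and the construction in \refsec{sec.NNIControl} that every NNI navigation path from $\treeA$ to $\treeB$ is, by construction, a walk in the NNI-graph $\NNIgraph_{\indexset}=\prl{\bintreetopspace_{\indexset},\NNIedgeset}$ joining $\treeA$ to $\treeB$ (finite-length by \refthm{thm.Stability} and terminating at $\treeB$ by the control law). Since $\distNNI\prl{\treeA,\treeB}$ is the length of a shortest such walk, it is at most $\distNav\prl{\treeA,\treeB}$, and by \refthm{thm:NavPathLength} this length is independent of the choices made by $\vectbf{u}_{\treeB}$, so the inequality is well-defined.

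Combining the four steps gives $\tfrac{2}{3}\distRF \leq \tfrac{2}{3}\distNNI \leq \tfrac{2}{3}\distNav \leq \distCM \leq \distCC$. I expect no real obstacle: everything but the middle inequality is already packaged as a lemma or proposition, and the middle inequality is a one-line observation about shortest paths in a graph. The only stylistic care needed is to remark explicitly, before invoking it, that an NNI navigation path is indeed a path in $\NNIgraph_{\indexset}$ (so that the shortest-path comparison is legitimate), since this was implicit in the definitions of \refsec{sec.NavDistance} but not emphasized.
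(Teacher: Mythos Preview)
Your proposal is correct and matches the paper's approach: the paper simply presents the theorem as the combination of \refcor{cor:NNI_RF_bound}, \refprop{prop:CM_CC_bound}, and \refprop{prop:Nav_CM_bound}, leaving the middle inequality $\distNNI \leq \distNav$ implicit from the definition of $\distNav$ as the length of an NNI-graph path. Your explicit treatment of that step (invoking \refthm{thm.Stability} for finiteness and \refthm{thm:NavPathLength} for well-definedness) is more careful than the paper itself.
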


Finally, we remark that the NNI navigation dissimilarity $\distNav$ (\refdef{def:navigation distance}) can be generalized to a pair of trees, $\treeA$ and $\treeB$, in $\treetopspace_{\indexset}$ as 
\begin{equation}\label{eq:NavDistanceAll}
\distNav\prl{\treeA, \treeB} = \frac{1}{2}\prl{\big.\norm{\big.\SMat\prl{\treeA,\treeB}}_1 + \norm{\big.\SMat\prl{\treeB,\treeA}}_1},
\end{equation} 
which is non-negative and symmetric.
For non-degenerate trees $\treeA,\treeB \in \bintreetopspace_{\indexset}$ one has $\SMat\prl{\treeA,\treeB} =  \tr{\SMat\prl{\treeB,\treeA}}$ (which is evident from \refeqn{eq.SMat} and \reflem{lem:xindex is symmetric}), so that $\distNav$ in \refeqn{eq:NavDistanceAll} simplifies back to \refeqn{eqn:distNav xindex}.\footnote{$\tr{\mat{A}}$ is the transpose of matrix $\mat{A}$.}
Although the closed form expression of $\distNav$ in \refthm{thm:navigation graphs work} enables the generalization of $\distNav$ to degenerate trees as above, the notion of NNI moves (\refdef{def:NNIMove}) is generally not valid in $\treetopspace_{\indexset}$.     

As for non-degenerate trees in \refprop{prop:Nav_CM_bound}, the generalized $\distNav$ in $\treetopspace_{\indexset}$ can be bounded above by $\distCM$ as follows: 

\begin{proposition}\label{prop:Nav_CM_bound_Degenerate}
Over $\treetopspace_{\indexset}$ one has $\distNav \leq \prl{\frac{1}{8}\card{\indexset}^2 + \frac{1}{4}\card{\indexset}} \distCM\,$.
\end{proposition}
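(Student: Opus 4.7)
The plan is to bound each of the two summands $\norm{\mat{S}(\treeA,\treeB)}_1$ and $\norm{\mat{S}(\treeB,\treeA)}_1$ appearing in the symmetric definition~\refeqn{eq:NavDistanceAll} of $\distNav$ by $\walkST(\card{\indexset}/2)\,\distCM(\treeA,\treeB)$. Since $\walkST(\card{\indexset}/2) = \tfrac{1}{8}\card{\indexset}^2 + \tfrac{1}{4}\card{\indexset}$ and $\distCM$ is symmetric in its arguments, averaging the two bounds as in~\refeqn{eq:NavDistanceAll} then delivers the proposition, so it suffices to handle $\norm{\mat{S}(\treeA,\treeB)}_1$.

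First I would establish the pointwise inequality $\sumST_{\treeA,\treeB}(I,J) \leq \card{\indexset}/2$ for all $I \in \cluster{\treeA}$ and $J \in \cluster{\treeB}$. Since singletons are compatible with every family of subsets of $\indexset$, only restricted children $A = C \cap J$ (for $C \in \childCL{I,\treeA}$) with $\card{A} \geq 2$ can contribute to $\sumST_{\treeA,\treeB}(I,J)$. These contributing $A$'s inherit pairwise disjointness from the children of $I$ and all lie inside $I \cap J$, so $2\,\sumST_{\treeA,\treeB}(I,J) \leq \card{I\cap J} \leq \card{\indexset}$. Combined with the monotonicity of $\walkST$ on $\N$, this yields the pointwise bound
\[
\walkST\prl{\sumST_{\treeA,\treeB}(I,J)} \leq \walkST\prl{\tfrac{\card{\indexset}}{2}} \cdot \indicatorfunc{\sumST_{\treeA,\treeB}(I,J) > 0}.
\]

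Next I would prove the support bound $\#\crl{(I,J) \,:\, \sumST_{\treeA,\treeB}(I,J) > 0} \leq \distCM(\treeA,\treeB)$. Whenever $\sumST_{\treeA,\treeB}(I,J) > 0$ there exist children $C \in \childCL{I,\treeA}$ and $D \in \childCL{J,\treeB}$ with $(C \cap J) \not\compatible (D \cap I)$, and the contrapositive of the restriction implication already exploited in the proof of \refprop{prop:Nav_CM_bound} ($C \compatible D \Rightarrow (C\cap J) \compatible (D\cap I)$) forces $C \not\compatible D$. Picking one such crossing pair $(C_{IJ},D_{IJ})$ for every $(I,J)$ in the support defines an injection into the set of crossing pairs $\crl{(C,D) \in \cluster{\treeA}\times\cluster{\treeB} \,:\, C \not\compatible D}$, whose cardinality equals $\distCM(\treeA,\treeB)$: since every non-root cluster has a unique parent in its tree, distinct $(I,J)$'s produce distinct $C_{IJ}$'s and $D_{IJ}$'s. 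Summing the pointwise bound over $(I,J)$ and applying this support bound yields $\norm{\mat{S}(\treeA,\treeB)}_1 \leq \walkST(\card{\indexset}/2)\,\distCM(\treeA,\treeB)$, as required.

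The main obstacle is identifying the right injection for the support bound; once the contrapositive argument from \refprop{prop:Nav_CM_bound} is reused to lift a crossing of restricted children to a crossing of the underlying children, disjointness of child-sets across distinct parents makes the map automatically injective. The quadratic coefficient in $\card{\indexset}$ enters entirely through the slack in the pointwise $\walkST$ bound, reflecting that degenerate clusters may carry as many as $\card{\indexset}/2$ incompatible children---in contrast to the nondegenerate case $\sumST \in \crl{0,1,2}$ that yields the cleaner $\tfrac{3}{2}$-bound of \refprop{prop:Nav_CM_bound}.
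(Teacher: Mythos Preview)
Your argument is correct. Both you and the paper start from the same key observation --- that $\sumST_{\treeA,\treeB}(I,J)\le\card{\indexset}/2$ because at most $\card{\indexset}/2$ restricted children can be nontrivial --- but you then diverge from a literal replay of \refprop{prop:Nav_CM_bound}. There the chain is $\walkST(x)\le cx$ followed by reindexing $\sum_{I}\sum_{A\in\childCL{I,\treeA}}$ as a single sum over non-root clusters; here you instead take the uniform bound $\walkST(\sumST)\le\walkST(\card{\indexset}/2)\cdot\indicatorfunc{\sumST>0}$ and control the size of the support by an injection $(I,J)\mapsto(C_{IJ},D_{IJ})$ into the set of crossing cluster pairs, using unique parents to guarantee injectivity. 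Your route lands exactly on the stated coefficient $\walkST(\card{\indexset}/2)=\tfrac18\card{\indexset}^2+\tfrac14\card{\indexset}$, whereas transplanting the paper's reindexing verbatim (with $\walkST(x)\le\tfrac12(\card{\indexset}/2+1)\,x$ on $[0,\card{\indexset}/2]$) would actually yield the sharper bound $\distNav\le\tfrac{\card{\indexset}+2}{4}\,\distCM$. So your injection argument is a clean self-contained alternative to the reindexing, at the price of the looser quadratic constant.
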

\begin{proof}
Note that the number of nontrivial children of a cluster in a tree can be at most $\frac{1}{2}\card{\indexset}$. 
Hence one can verify the result following  similar steps as in the  proof of \refprop{prop:Nav_CM_bound}. 
\end{proof}

\section{Discussion and Statistical Analysis}
\label{sec:Discussion}

\subsection{Consensus Models and Median Trees}

Let us recall a definition : a \emph{median tree} of a set of sample trees is a tree whose sum of distances to the sample trees is minimum.
Although the notion of a median tree is simple and well-defined,  finding a median tree of a set of trees is generally a hard combinatorial problem. 
On the other hand, a consensus model of a set of sample trees is a computationally efficient tool to identify  common structures of sample trees.  
In particular, a  remark relating $\distCM$ and $\distNav$ to commonly used consensus models of a set of trees  and their median tree(s) is:

\begin{proposition}
Both the strict and loose consensus trees, $T_{\ast}$ and $T^{\ast}$, of any set of trees $T$  in $\treetopspace_{\indexset}$ (\refdef{def.StrictLooseConsensus}) are median trees with respect to both the crossing ($\distCM$) and navigation ($\distNav$) dissimilarities. In fact, for any $d \in \crl{\distCM, \distNav}$ one has:
\begin{equation}
\sum_{\tree \in T} \dist\prl{\tree, T_{\ast}} = \sum_{\tree \in T} \dist\prl{\tree, T^{\ast}} = 0.
\end{equation}
%
\end{proposition} 
\begin{proof}
By \refdef{def.StrictLooseConsensus}, both strict and loose consensus trees only contain clusters that are compatible with the clusters of every tree in $T$, and  the loose consensus tree is the finest median tree containing only clusters from the sample trees. Thus, the result  follows for both $\distCM$ and $\distNav$ due their relation in \refprop{prop:Nav_CM_bound_Degenerate}. 
\end{proof}

\subsection{Sample Distribution of Dissimilarities}


To compare their discriminative power, we use a standard statistical analysis of empirical distributions of different tree measures.
The shape of the distribution of a tree measure tells how informative it is; for example, a highly concentrated distribution means that the associated tree measure behaves like the discrete metric\footnote{The discrete metric $\dist:X \times X \rightarrow \R_{\geq 0}$ on a set $X$ is defined as for any $x \neq y \in X$ $\dist\prl{x,x} = 0$ and $\dist\prl{x,y} = 1$. } as in the case of the Robinson-Foulds distance  --- see \reffig{fig:TreeMeasureDistributionN25}. 
Finding a closed form expression for the distribution of a tree measure is a hard problem, and so extensive numerical simulations are generally applied to obtain its sample distribution.   
In particular, using the uniform and Yule model \cite{semple_steel_2003} for generating random trees, we compute the empirical distributions of $\distRF$, $\distMS$, $\distCC$, $\distCM$, and $\distNav$ as illustrated in \reffig{fig:TreeMeasureDistributionN25}.\footnote{In our numerical simulations for any chosen tree measure we observe the same pattern of sample distribution for different numbers of leaves, and so here we only include results for $\bintreetopspace_{\brl{25}}$.}   
Moreover, in \reftab{tab:SkewnessKurtosis} we present two commonly used statistical  measures, skewness and kurtosis, for describing the shapes of the probability distributions of all these tree measures. 
Here, recall that the skewness of a probability distribution  measures its tendency on one side of the mean, and the concept of kurtosis measures the peakedness of the distribution \cite{rice_2007}.
In addition to their computational advantage over $\distMS$, as illustrated in both \reffig{fig:TreeMeasureDistributionN25} and \reftab{tab:SkewnessKurtosis}, like $\distMS$, our tree measures, $\distCC$, $\distCM$ and $\distNav$, are significantly more discriminative, with wider ranges of values and symmetry, than $\distRF$. 
  
\begin{table}[hb]
\caption{Skewness and Kurtosis Values for the Distributions of Tree Measures in $\bintreetopspace_{\brl{25}}$}
\label{tab:SkewnessKurtosis}
\centering
\vspace{2mm}
\begin{tabular}{|l|c|c|c|c|}
\hline
 & \multicolumn{2}{c|}{Skewness} & \multicolumn{2}{c|}{ Kurtosis} \\ \hline
 & Uniform & Yule &Uniform  & Yule\\ \hline
$\distRF$ \refeqn{eq.RFdistance}&  $-2.6162$ & $-2.0740$ & $9.8609$ & $7.3998$ \\ \hline
$\distMS$ (Def. \ref{def.RF and MS distances}) & $0.1293$ & $-0.0117$  & $3.0060$ & $3.1136$\\ \hline
$\distCC$ \refeqn{eq.CCdistance} & $-0.9294$ & $-1.2507$  & $3.8601$ & $5.2724$\\ \hline
$\distCM$ (Def. \ref{def.CompatibilityAndCrossingMat}) & $0.1390$ & $-0.0405$  & $3.1275$ & $3.2103$ \\\hline 
$\distNav$ (Def. \ref{def:navigation distance}) & $0.8809$ & $-0.1195$  & $4.8707$ & $3.0746$\\ \hline
\end{tabular}
\end{table}

\begin{figure}[htb]
\centering
\begin{tabular}{c}
\includegraphics[width= 0.95\textwidth]{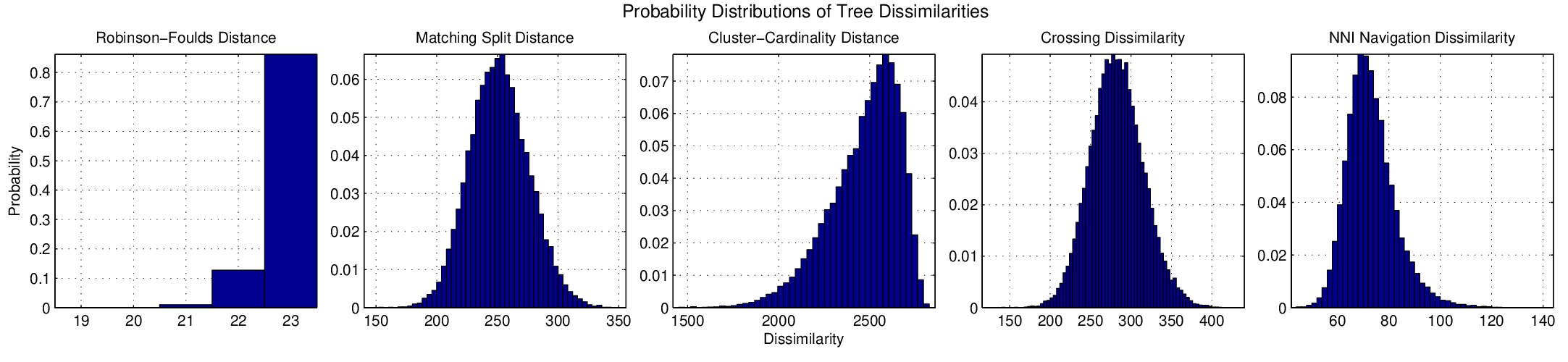} 
\\[-2.5mm]
\scalebox{0.8}{(a)}
\\
\includegraphics[width= 0.95\textwidth]{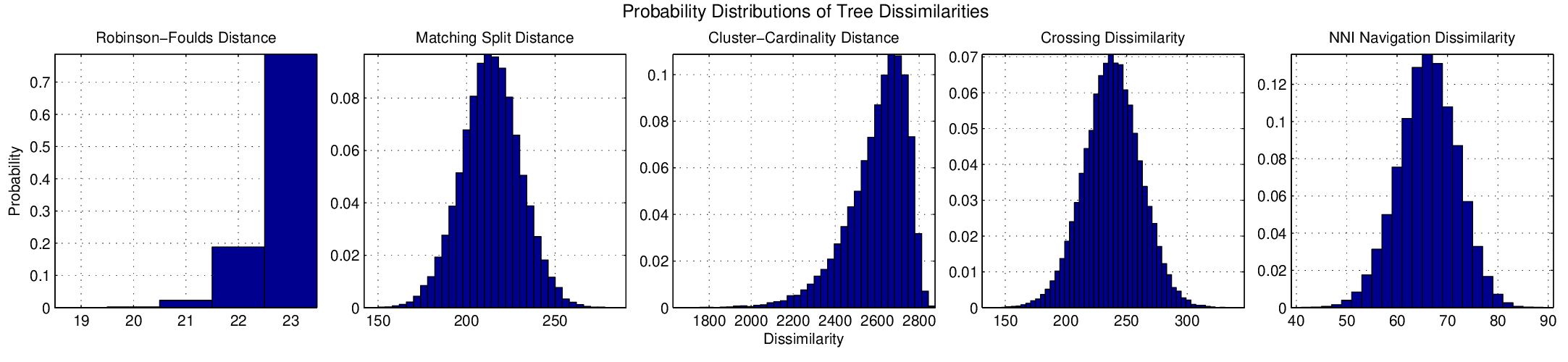} 
\\[-2.5mm]
\scalebox{0.8}{(b)}
\end{tabular}
\vspace{-3mm}
\caption{Empirical distribution of tree dissimilarities in $\bintreetopspace_{\brl{25}}$: (from left to right) the Robinson-Foulds distance $\distRF$ \refeqn{eq.RFdistance}, the matching split distance $\distMS$ (Def. \ref{def.RF and MS distances}), the cluster-cardinality distance $\distCC$ \refeqn{eq.CCdistance},  the crossing dissimilarity $\distCM$ (Def. \ref{def.CompatibilityAndCrossingMat}), and the NNI navigation dissimilarity $\distNav$ (Def. \ref{def:navigation distance}). 
100000 sample hierarchies  are generated using (a) the uniform  and (b)Yule model  \cite{semple_steel_2003}.
The resolutions of histograms of tree measures, from left to right, are 1, 4, 32, 4, 2 unit(s), respectively.
 }
\label{fig:TreeMeasureDistributionN25}
\end{figure}

\section{Conclusion}
\label{sec.Conclusion}

%

This paper presents three new tree measures  for efficient discriminative comparison of trees.
First, using the well known relation between trees and ultrametrics, the cluster-cardinality metric $\distCC$ is constructed  as the pullback of matrix norms along an embedding of trees into the space of matrices.
Second, we present the crossing dissimilarity $\distCM$ that counts the pairwise incompatibilities of trees.
Third, the NNI navigation dissimilarity $\distNav$ while presented in closed form is constructed as the length of a navigation path in the space of trees.

All of our dissimilarities can be computed in $\bigO{n^2}$ with the number of leaves $n$, and they generalize to degenerate trees as well.
Moreover, we provide a closed form expression for each proposed dissimilarity and present an ordering relation between these tree dissimilarities and related tree metrics in the literature (\refthm{thm:DissimilarityOrder}).
Our numerical studies, summarized in \reffig{fig:TreeMeasureDistributionN25}, suggest that the proposed tree measures are significantly more informative and discriminative than the Robinson-Foulds distance $\distRF$, while maintaining a computational advantage over other distances such as the matching-split distance \cite{bogdanowicz_giaro_TCBB2012,lin_rajan_moret_TCBB2012}.   

Finally, the system of projector graphs (\refthm{thm:projectors work}) and navigation graphs (\refthm{thm:navigation graphs work}) seems to play a fundamental role in the geometry of the NNI graph, realizing many of the intuitive desiderata of tree dissimilarity measures that have accumulated in the literature over the years. Consequently, NNI navigation paths are likely of some significance for consensus/average models or statistical analysis of trees. 


\section*{Acknowledgements}

This work was funded in part by the Air Force Office of Science Research under the MURI FA9550-10-1-0567.


\appendix

\section{Proofs}
\label{app.Proofs}

\subsection{Proof of \reflem{lem.NNITriple}}
\label{app.NNITriple}

\begin{proof} Sufficiency is directly evident from \refdef{def:NNIMove} because  the cluster sets of a pair of nondegenerate hierarchies differ exactly by one cluster if and only if they are NNI-adjacent. To verify necessity, let the move $\prl{\treeA,P}$, $P \in \grandchildset{\treeA}$ join $\treeA$ to $\treeB$, and $R= \complementLCL{P}{\treeA}$ and $Q= \grandparentCL{P,\treeA} \setminus \parentCL{P,\treeA}$. By \refdef{def:NNIMove}, $\crl{\parentCL{P,\treeA}} = \crl{P \cup R}=\cluster{\treeA}\setminus \cluster{\treeB}$ and  $\crl{\grandparentCL{P,\treeA} \setminus P} = \crl{R \cup Q}=\cluster{\treeB}\setminus \cluster{\treeA}$. Further, $\prl{P,R,Q}$ is the only ordered triple of common clusters of $\treeA$ and $\treeB$ with the property that $\crl{P \cup R} = \cluster{\treeA} \setminus \cluster{\treeB}$ and $\crl{R\cup Q} = \cluster{\treeB}\setminus \cluster{\treeA}$ since the cluster sets of any two NNI-adjacent hierarchies differ exactly by one element. 
\end{proof}

\subsection{Proof of \reflem{lem:BinTreeRestriction}}
\label{app.BinTreeRestriction}

\begin{proof} To observe that $\treeres{K}\prl{\bintreetopspace_{\indexset}} \supseteq \bintreetopspace_{K}$, consider any two nondegenerate trees $\treeA \in \bintreetopspace_{K}$ and $\treeC \in \bintreetopspace_{\indexset \setminus K}$, and let $\treeB \in \bintreetopspace_{\indexset}$ be the nondegenerate tree with cluster set $\cluster{\treeB} = \cluster{\treeA} \cup \crl{\indexset} \cup \cluster{\treeC}$. Note that $\childCL{\indexset, \treeB} = \crl{K,\indexset \setminus K}$. Hence, we have from \refrem{rem:ClusterConcatenation} that $\treeA = \treeres{K}\prl{\treeB}$. To prove that $\treeres{K}\prl{\bintreetopspace_{\indexset}} \subseteq \bintreetopspace_{K}$, let $\treeB \in \bintreetopspace_{\indexset}$ and $I \in \cluster{\treeB}$ with the property that $\card{I \cap K} \geq 2$. Note that $I \cap K$ is an interior cluster of $\treeressym{K}{\treeB}$. We shall show that the cluster $I \cap K  \in \cluster{\treeressym{K}{\treeB}}$ always admits a bipartition in $\treeressym{K}{\treeB}$. That is to say,  there exist a  cluster $A \in \cluster{\treeB}$ with children $\crl{A_L,A_R} = \childCL{A,\treeB}$ such that $A \cap K = I \cap K$ and $A_L \cap K \neq \varnothing$ and $A_R\cap K \neq \varnothing$. Hence, $\childCL{I \cap K, \treeressym{K}{\treeB}} = \crl{A_L \cap K, A_R \cap K}$. Now observe that either $I_L \cap K \neq \varnothing$ and $I_R \cap K \neq \varnothing $ for $\crl{I_L,I_R} = \childCL{I,\treeB}$, or there exists one and only one descendant $D \in \descendantCL{I,\treeB}$ with $\crl{D_L, D_R} = \childCL{D,\treeB}$ such that $I \cap K = D \cap K$ and $D_L \cap K \neq \varnothing$ and $D_R \cap K \neq \varnothing$. Thus, all the interior clusters of $\treeressym{K}{\treeB}$  have exactly two children, which completes the proof.
\end{proof} 

\subsection{Proof of \reflem{lem.treeUltrametric}}
\label{app.treeUltrametric}

\begin{proof} The proof of the sufficiency for being an ultrametric is as follows. 
Positive definiteness and symmetry of $\dist_{\tree}$ are  evident from  \refeqn{eq.treeUltrametric} and \reflem{lem.treeUltrametric}.\ref{it.heightMonotonicity}-\ref{it.heightIndiscernible}. 
To show the strong triangle inequality, let $i\neq j \neq k \in \indexset$ and  $I = \cancCL{i}{j}{\tree}$, and so $\dist_{\tree}\prl{i,j} = h_{\tree}\prl{I}$.
Accordingly, let $\{I_i, I_j\} \subseteq \childCL{I,\tree}$ with the property that $i \sqz{\in} I_i$ and $j \sqz{\in} I_j$. 

If $k \in I$, without loss of generality, let $k \in I_i$, and so $k \not \in I_j$. 
Then, using \refeqn{eq.treeUltrametric} and \reflem{lem.treeUltrametric}.\ref{it.heightMonotonicity}, one can  verify  that $\dist_{\tree}\prl{i,k} \leq h_{\tree}\prl{I_i} \leq h_{\tree}\prl{I}$ and $\dist_{\tree}\prl{j,k} = h_{\tree}\prl{I}$ because $\cancCL{i}{k}{\tree} \subseteq I_i$ and $\cancCL{j}{k}{\tree} = I$.
Also note that if neither $k  \in I_i$ nor $k \in I_j$ (but still $k \in I$), then $\dist_{\tree}\prl{i,k} = \dist_{\tree}\prl{j,k} = h_{\tree}\prl{I}$ since $\cancCL{i}{k}{\tree} = \cancCL{j}{k}{\tree}= I$.
Similarly, if $k \not \in I$, then  $\dist_{\tree}\prl{i,k} \geq h_{\tree}\prl{I}$ and $\dist_{\tree}\prl{j, k} \geq h_{\tree}\prl{I}$ because only some ancestors of $I$ in $\tree$ might contain all $i,j,k$. 
Therefore, overall, one always has  $\dist_{\tree}\prl{i,j} \leq \max\prl{\big. \dist_{\tree}\prl{i,k}, \dist_{\tree}\prl{k, j}}$, which completes the proof  of the sufficiency. 

Let us continue with the necessity for being an ultrametric. 
Note that \reflem{lem.treeUltrametric}.\ref{it.heightIndiscernible} directly follows from positive definiteness of $\dist_{\tree}$.  
Let $I \in \cluster{\tree} \setminus \crl{\indexset}$ be any non-singleton cluster of $\tree$ and $i \neq j \in I$ with the property that $\cancCL{i}{j}{\tree} = I$. 
For any $k \in \complementLCL{I}{\tree}$, we always have 
$\cancCL{i}{k}{\tree} = \cancCL{j}{k}{\tree} = \parentCL{I,\tree}$.
Now, using the ultrametric inequality of $\dist_{\tree}$, one deduces \reflem{lem.treeUltrametric}.\ref{it.heightMonotonicity} from
\begin{equation}
h_{\tree}\prl{I} = \dist_{\tree}\prl{i,j} \leq \max\prl{\big.\dist_{\tree}\prl{i,k}, \dist_{\tree}\prl{j,k}} = h_{\tree}\prl{\big. \parentCL{I,\tree}} ,
\end{equation}    
which completes the proof.
\end{proof}



\bibliography{NNIWalk}







\end{document}